\newcommand{\etal}{{et~al. }}
\title{Minimum Ply Covering of Points with Disks and Squares
}
\author{
	Therese Biedl\qquad
	Ahmad Biniaz\qquad 
	Anna Lubiw}
\affil{University of Waterloo, Canada}
\date{\today}
\newtheorem{lemma}{Lemma}
\newtheorem{theorem}{Theorem}
\newtheorem*{problem*}{Problem}
\newtheorem{claim}{Claim}
\newtheorem*{invariant*}{Invariant}
\begin{document}
	\maketitle
	\begin{abstract}
Following the seminal work of Erlebach and van Leeuwen in SODA 2008, we introduce the minimum ply covering problem. Given a set $P$ of points and a set $S$ of geometric objects, both in the plane, our goal is to find a subset $S'$ of $S$ that covers all points of $P$ while minimizing the maximum number of objects covering any point in the plane (not only points of $P$).
For objects that are unit squares and unit disks, this problem is NP-hard and cannot be approximated by a ratio smaller than 2. We present 2-approximation algorithms for this problem with respect to unit squares and unit disks. Our algorithms run in polynomial time when the optimum objective value is bounded by a constant.

Motivated by channel-assignment in wireless networks, 
we consider a variant of the problem where the selected unit disks must be {\em 3-colorable}, i.e., 
colored by three colors such that all disks of the same color are pairwise disjoint. We present a polynomial-time algorithm that achieves a 2-approximate solution, i.e., a solution that is 6-colorable.

We also study the weighted version of the problem in dimension one, where $P$ and $S$ are points and weighted intervals on a line, respectively. We present
an algorithm that solves this problem in $O(n + m + M )$-time where $n$ is the number of points, $m$ is the number of intervals, and $M$ is the number of pairs of overlapping intervals.
This repairs a solution claimed by Nandy, Pandit, and Roy in CCCG 2017.
	\end{abstract}
	
\section{Introduction}
Motivated by interference reduction in cellular networks, Kuhn \etal \cite{Kuhn2005} introduced  {\em Minimum Membership Set Cover} (MMSC) as a combinatorial optimization problem. The input to this problem consists of a set $U$ of elements and a collection $S$ of subsets of $U$, whose union contains all elements of $U$. The {\em membership} of an element $u\in U$ with respect to a subset $S'$ of $S$
is the number of sets in $S'$ that contain $u$. 
The goal is to find a subset $S'$ of $S$ that covers all elements of $U$ and that minimizes the maximum membership of elements in $U$.
The MMSC problem is closely related to the well-studied {\em Minimum Set Cover} problem
in which the goal is to find a minimum cardinality 
subset $S'$ of $S$ that covers all elements of $U$. By a reduction from the minimum set cover problem, Kuhn et al.~\cite{Kuhn2005} showed that the MMSC problem is NP-complete and cannot be approximated, in polynomial time, by a ratio less than $\ln n$ unless $NP\subset TIME\left(n^{O(\log \log n)}\right)$, where $n:=|U|$ is the number of elements. They also presented an $O(\ln n)$-approximation algorithm for the MMSC problem by formulating it as a linear program. 

Demaine \etal \cite{Demaine2008} introduced a maximization version of the MMSC problem, in which the input contains an extra parameter $\beta$, and the goal is to find a subset $S'$ of $S$ that covers the maximum number of elements of $U$ such that the membership of every covered element with respect to $S'$ is at most $\beta$. The special case, where $\beta=1$, is known as the {\em unique coverage} problem. See \cite{Demaine2008} for a collection of inapproximability results and \cite{Misra2013} for the parameterized complexity of the unique coverage problem.

The geometric versions of the above problems, where the elements are points in the plane and the sets are geometric objects in the plane, are also well studied. See \cite{Basappa2015,Biniaz2017,Mustafa2010} (and references therein) for some recent results on the geometric minimum set cover problem and its variants, and see \cite{Erlebach2008,Ito2014,Ito2016,vanLeeuwen2009} for some recent results on the geometric unique coverage problem.

The geometric MMSC problem attracted considerable attention following the seminal work of Erlebach and van~Leeuwen in SODA 2008 \cite{Erlebach2008}.
The input to this problem consists of a set $P$ of points and a set
$S$ of geometric objects both in the plane. The goal is to find a subset $S'$ of $S$ such that (i) $S'$ covers all points of $P$, i.e., the membership of every point is at least 1, and (ii) $S'$ minimizes the maximum membership of points of $P$. 
They proved that the geometric MMSC problem 
is NP-hard 
for unit disks and for axis-aligned unit squares, and does not admit a polynomial-time	approximation algorithm with ratio smaller than 2 unless P=NP. For unit squares, they presented a 5-approximation algorithm that takes polynomial time if the optimal objective value (i.e., the maximum membership) is bounded by a constant. To the best of our knowledge, no $O(1)$-approximation algorithm is known for unit disks.

\begin{wrapfigure}{r}{1.82in} 
	\vspace{-5pt} 
	\centering
	\includegraphics[width=.27\columnwidth]{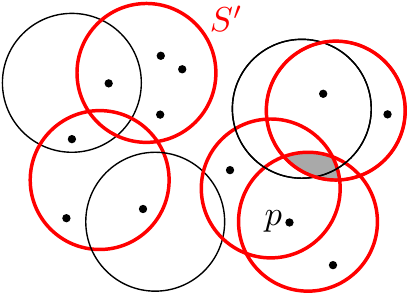}
	\vspace{-5pt} 
\end{wrapfigure}
In some applications, e.g. interference reduction in cellular networks, it 
is desirable to minimize 
the membership of every point in the plane, not only points of $P$. Thus we study a version of the geometric MMSC problem in which we want to find a subset $S'$ of $S$ that covers all points of $P$ and minimizes the maximum membership of all points of the plane (not only points of $P$). 
We refer to this version of the problem as {\em minimum ply covering} (MPC).
The {\em ply} \cite{EG08} of a set $S'$ is defined to be the maximum membership of points of the plane with respect to $S'$. With this definition, the MPC problem asks for a subset $S'$, with minimum ply, that covers $P$. In the figure to the right, the membership of input points with respect to $S'$ is at most 2 (see point $p$), while the ply of $S'$ is 3 (see the shaded area).

By a simple modification of the hardness proof of \cite{Erlebach2008}, we show (in Section~\ref{nphard-proof}) that the MPC problem is  
NP-hard, for both unit squares and unit disks, and does not admit polynomial-time approximation algorithms with ratio smaller than 2 unless P=NP.	
As our main result, we present 2-approximation algorithms for the MPC problem on unit squares and unit disks. Both algorithms run in polynomial time if the optimal objective value (i.e., the minimum ply) is bounded by a constant.

Motivated by channel-assignment in wireless networks, where the use of 3 channels is a standard practice, we study a variant of the MPC problem on unit disks where we want the solution to be 3-colorable, i.e., to be partitioned into
three subsets such that the disks in each subset are pairwise disjoint (each subset has ply 1). See \cite{Brass2010} for a justification of the importance of 3 channels. We present a polynomial-time 2-approximation algorithm for this version as well.

We also revisit the weighted version of the geometric MMSC problem in dimension one, where $P$ and $S$ are points and weighted intervals on the real line, respectively. This problem was previously claimed solved by Nandy, Pandit, and Roy 
\cite{Nandy2017}, who referred to the problem as ``minimum depth covering''. We point out a mistake in their algorithm.
We present an $O(n+m+M)$-time algorithm that solves this problem optimally, where $n$ is the number of points, $m$ is the number of intervals, and $M$ is the number of pairs of overlapping intervals. 
Our algorithm can be adapted in a simple way to solve the MPC problem on weighted intervals within the same time bound.

\section{Minimum Ply Covering with Unit Squares}
\label{square-section}
In this section we study the MPC problem on unit squares. We are given a set $P$ of $n$ points and a set $S$ of $m$ axis-aligned unit squares, both in the plane. We assume that unit squares are closed (contain their boundaries) and have side length 1. Our goal is to find a subset $S'$ of $S$, with minimum ply, that covers all points of $P$.
This problem cannot be approximated in polynomial-time by a ratio smaller than 2; see Section~\ref{nphard-proof}. 
We present a 2-approximation algorithm that takes polynomial time if the minimum ply is bounded by a constant. In the rest of this section we assume that the minimum ply is bounded by $\ell$.

We partition the plane into horizontal slabs of height 2; this is a standard initial step of many geometric covering algorithms. 
We may assume that no point of $P$ or edge of a square in $S$ lies on the boundary of any slab.  Let $H_1, H_2,\dots$ denote the slabs from bottom to top.
For $j\in\{1,2,\dots\}$, let $P_j$ be the points of $P$ in $H_j$ and let $S_j$ be the set of squares that intersect $H_j$. 
Note that if there exists a solution $S^*$ for the MPC problem
then $S^*\cap S_j$ covers all points in $P_j$ and has ply at most $\ell$,
and thus $S^*\cap S_j$ is a solution for the MPC problem on input instance $P_j$ and $S_j$ that has ply at most 
$\ell$.  Our approach is therefore to solve MPC for this input instance, i.e., for slab
$H_j$.  
If this fails for some $j$, then the MPC problem on $P$ and $S$ has no solution with ply $\ell$.  If 
this succeeds for all $j$, then we set $S'=\bigcup_j S_j'$, where $S_j'$ is the
solution for slab $H_j$.  
Certainly all points of $P$ are covered.
Any square in $S'$
belongs to solutions of at most two consecutive slabs $H_j$ and $H_{j+1}$.  Thus, any point in the plane is covered by squares of at most two solutions $S_j'$
and $S_{j+1}'$, and hence is covered by at most $2\ell$ squares of $S'$. Therefore, the ply of $S'$ is at most $2\ell$.

In the rest of this section we show how to solve the MPC problem for every slab $H_j$.  
To simplify our description we assume that the left and right sides of all squares in $S_j$ have
distinct $x$-coordinates and no point lies on the left or right side of a square; 
we will describe later how to handle coinciding $x$-coordinates. 
We partition the plane into vertical strips by lines through left and right sides of all squares in $S_j$. 
Let $t_0, t_1, \ldots, t_k$ denote these vertical strips, ordered from left to right. We consider every strip as an open set, i.e., the vertical
line between $t_{i-1}$ and $t_i$ belongs to neither of them. 
The leftmost strip $t_0$ is unbounded to the left and the rightmost strip $t_k$ is unbounded to the right. Since $|S_j|\!\leqslant\! m$, we have $k\! \leqslant\! 2m$. The following lemma is important for our strategy to solve the problem.

\begin{lemma}
	\label{continues-square-lemma}
	Let $S^*_j\!\subseteq\! S_j\!$ be any solution, with ply at most $\ell$, for the MPC problem. The number of squares in $S^*_j$ that intersect every strip $t_i$ is at most $3\ell$. 
\end{lemma}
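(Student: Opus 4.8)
The plan is to fix a single strip $t_i$ and show that at most $3\ell$ squares of $S^*_j$ meet it; since nothing in the argument depends on $i$, the bound then holds for each strip (which I take to be the intended reading of the statement). The key structural observation is that the vertical strips are delimited \emph{precisely} by the left and right sides of the squares in $S_j$, so for any square $s\in S_j$ the $x$-extent of $s$ either contains all of $t_i$ or is disjoint from $t_i$ in the $x$-direction. Consequently, if I fix any $x$-coordinate $x_0$ strictly inside $t_i$, then a square of $S_j$ intersects $t_i$ if and only if it meets the vertical line $x=x_0$. This reduces the two-dimensional counting question to a one-dimensional one along that line.

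Next I would look at the $y$-extents. Suppose the slab $H_j$ occupies $c<y<c+2$. Each square in $S_j$ has a $y$-extent that is a closed interval of length $1$ meeting $(c,c+2)$, and using the assumption that no square edge lies on a slab boundary, its lower endpoint lies in $(c-1,c)\cup(c,c+1)\cup(c+1,c+2)$. I would then introduce the three sample points $q_0=(x_0,c)$, $q_1=(x_0,c+1)$, $q_2=(x_0,c+2)$ and verify the elementary fact that every unit-length interval meeting $(c,c+2)$ contains at least one of the values $c$, $c+1$, $c+2$: checking the three cases for the lower endpoint shows the length-$1$ interval always straddles one of these three marks. Combined with the first paragraph, this gives that every square of $S_j$ intersecting $t_i$ contains at least one of the three points $q_0,q_1,q_2$.

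To finish, I would invoke the ply hypothesis. Since $S^*_j$ has ply at most $\ell$, each of the points $q_0,q_1,q_2$ lies in at most $\ell$ squares of $S^*_j$. As every square of $S^*_j$ that meets $t_i$ is charged to (at least) one of these three points, the number of such squares is at most $3\ell$, as claimed. I would also remark that three sample points are genuinely needed: the possible lower endpoints range over an interval of length $3$, while a single $y$-value $y_s$ is contained in squares whose lower endpoint lies in the length-$1$ window $[y_s-1,y_s]$, so at least three windows are required to cover them. This is exactly why the bound is $3\ell$ and cannot be improved by this argument.

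I do not expect a serious obstacle here; the crux is simply recognizing the right reduction (collapsing the strip to the vertical line $x=x_0$ and then to three well-chosen horizontal cuts), after which the ply bound does the work. The only place demanding care is the boundary bookkeeping: I would lean on the general-position assumptions (no point or square edge on a slab boundary, and no edge on a strip boundary) to ensure cleanly that ``meets $t_i$'' is equivalent to ``contains $t_i$ in $x$,'' and that each relevant unit interval genuinely \emph{contains} one of $c,c+1,c+2$ rather than merely touching it.
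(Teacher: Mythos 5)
Your proposal is correct and follows essentially the same argument as the paper: restrict to a vertical line inside $t_i$, note that every square meeting the strip must contain one of the three points at the bottom, middle, and top of the slab on that line, and apply the ply bound via pigeonhole. The extra case analysis on the lower endpoint of the unit $y$-interval is just a more explicit version of the paper's observation that the three points are spaced at distance $1$.
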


\begin{wrapfigure}{r}{1.3in} 
	\vspace{-10pt} 
	\centering
	\includegraphics[width=.18\columnwidth]{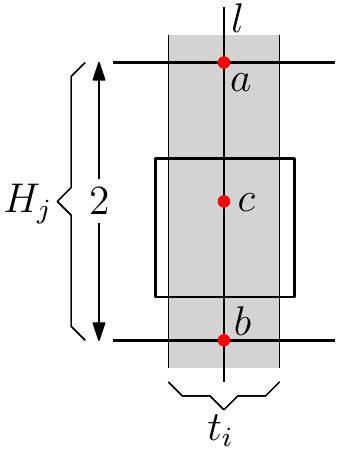}
	\vspace{-10pt} 
\end{wrapfigure}
\noindent{\em Proof.}
Let $l$ be a vertical line in the interior of $t_i$. Let $a$ and $b$ be the intersection points of $l$ with the upper and lower boundaries of the slab $H_j$, and let $c$ be the midpoint of the line segment $ab$; see the figure to the right. Notice that $|ac|=|bc|=1$. Because of this, and since no square has its left or right side in the interior of $t_i$, it follows that every square in $S^*_j$ that intersects $t_i$ contains at least one of the three points $a$, $b$ and $c$. Therefore, if more that $3\ell$ squares of $S^*_j$ intersect $t_i$, then by the pigeonhole principle one of the three points lies in more than $\ell$ squares of $S^*_j$, a contradiction.\qed
\vspace{10pt}

Based on Lemma~\ref{continues-square-lemma}, we construct a directed acyclic graph $G$ such that any solution $S^*_j$ corresponds to a path from the source vertex to the sink vertex in $G$. 
Afterwards, we will find a path in $G$ which will correspond to a solution with ply at most $\ell$. 
Now we describe the construction of $G$. 
For every $t_i$, with $i\in \{0,\dots,k\}$, we define a set $V_i$ of vertices as follows: 
For every subset $Q\subseteq S_j$, containing at most $3\ell$ squares that intersect $t_i$, we add a vertex $v_i(Q)$ to $V_i$ if the following conditions hold 

\begin{enumerate}
	\item[(i)] the squares in $Q$ cover all points of $P_j$ that lie in $t_i$ (all points in $t_i\cap P_j$),  
	\item[(ii)] the ply of $Q$ is at most $\ell$ (i.e.~every point in $\mathbb{R}^2$ is in at most $\ell$ squares of $Q$).
\end{enumerate}

Since no square intersects $t_0$ and $t_k$, we have $V_0=\{v_0(\emptyset)\}$ and $V_k=\{v_k(\emptyset)\}$. The vertices $v_0(\emptyset)$ and $v_k(\emptyset)$ are the source and sink vertices of $G$. The vertex set of $G$ is the union of the sets $V_i$. The edge set of $G$ consists of directed edges from the vertices in $V_i$ to the vertices in $V_{i+1}$ defined as follows. For every $i\in\{0,\dots,k-1\}$ and for every vertex $v_i(Q)\in V_i$ we add three directed edges from  $v_i(Q)$ to the following three vertices in $V_{i+1}$ (provided they exist):

\begin{enumerate}
	\item[1.] the vertex $v_{i+1}(Q')$ with $Q'=Q$,
	\item[2.] the vertex $v_{i+1}(Q')$ with $Q'=Q\setminus\{q\}$, where $q$ is the square whose right side is on the left boundary of $t_{i+1}$; see Figure~\ref{continues-square-fig}(a),	
	\item[3.] the vertex $v_{i+1}(Q')$ with $Q'=Q\cup \{q\}$, where $q$ is the square whose left side is on the left boundary of $t_{i+1}$; see Figure~\ref{continues-square-fig}(b).
\end{enumerate} 

This the end of our construction of $G$. Observe that in all cases sets $Q$ and $Q'$ differ by at most one square,
and $Q\subseteq Q'$ and/or $Q\supseteq Q'$.  

\begin{figure}[htb]
	\centering
	\vspace{-1pt}
	\setlength{\tabcolsep}{0in}
	$\begin{tabular}{cc}
	\multicolumn{1}{m{.55\columnwidth}}{\centering\includegraphics[width=.45\columnwidth]{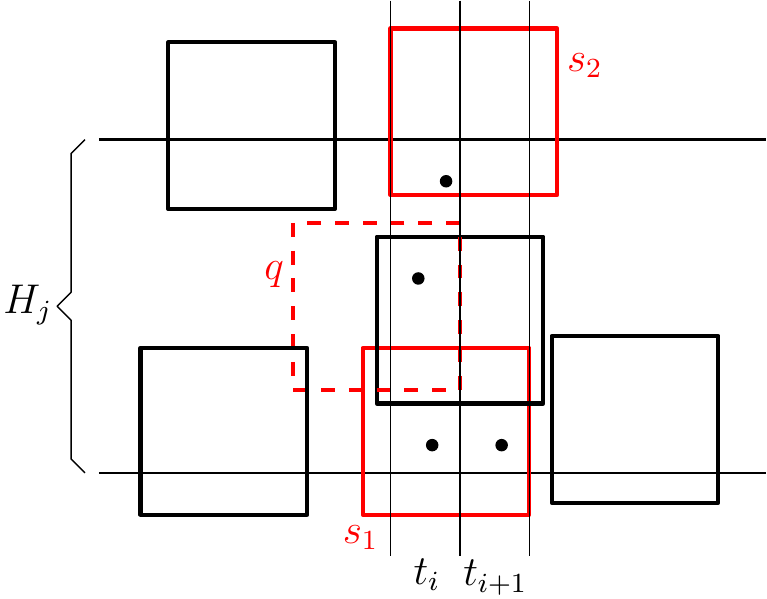}}
	&\multicolumn{1}{m{.45\columnwidth}}{\centering\includegraphics[width=.37\columnwidth]{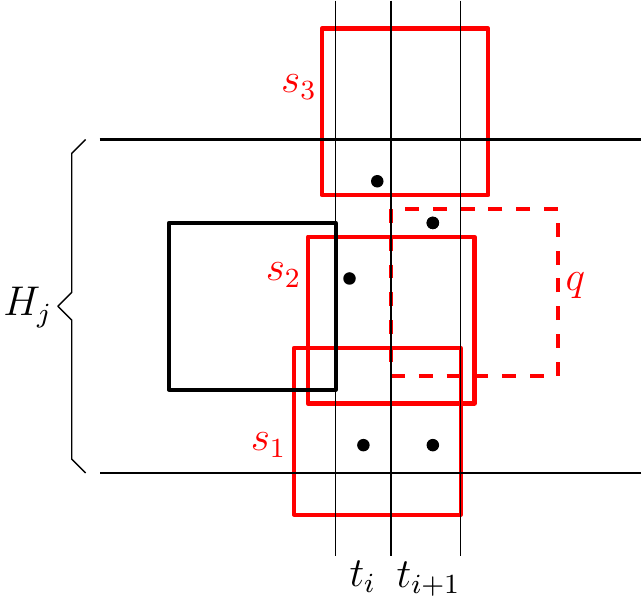}}\\
	(a)&(b)
	\end{tabular}$
	\vspace{-1pt}
	\caption{Construction of $G$. The representation of an edge from $v_i(Q)$ to $v_{i+1}(Q')$ in two cases: (a) $Q=\{s_1,s_2,q\}$ and $Q'=\{s_1,s_2\}$, and (b) $Q=\{s_1,s_2,s_3\}$ and $Q'=\{s_1,s_2,s_3,q\}$.}
	\vspace{-1pt}
	\label{continues-square-fig}
\end{figure}

Consider any path $\delta$ from $v_0(\emptyset)$ to $v_k(\emptyset)$ in $G$. Let $S'_j$ be the union of all sets $Q$ corresponding to the vertices of $\delta$. Our algorithm outputs $S'_j$ as a solution of the MPC problem on $P_j$ and $S_j$. The following claim proves the correctness of our algorithm.

\begin{claim}
	{If the MPC problem on $P_j$ and $S_j$ has a solution with ply at most $\ell$, then there exists a path $\delta$ from $v_0(\emptyset)$ to $v_k(\emptyset)$ in $G$. If there exist such a path $\delta$, then the set $S'_j$ is a solution with ply at most $\ell$.} 
\end{claim}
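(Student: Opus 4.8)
The claim has two directions, so the plan is to prove each separately.

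**Forward direction (solution $\Rightarrow$ path).** Suppose $S^*_j \subseteq S_j$ is a solution with ply at most $\ell$. I would define, for each strip $t_i$, the set $Q_i := \{q \in S^*_j : q \text{ intersects } t_i\}$. The plan is to check that each $Q_i$ is a legitimate vertex $v_i(Q_i) \in V_i$ and that consecutive $Q_i, Q_{i+1}$ are joined by an edge of $G$. For the vertex conditions: by Lemma~\ref{continues-square-lemma}, $|Q_i| \le 3\ell$ (only squares intersecting $t_i$ are counted), so the cardinality bound holds; condition (i) holds because every point of $P_j \cap t_i$ is covered by $S^*_j$ and the covering square must intersect $t_i$, hence lies in $Q_i$; condition (ii) holds because the ply of $Q_i \subseteq S^*_j$ is at most the ply of $S^*_j$, which is at most $\ell$. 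For the edges, the key observation is that as we pass from $t_i$ to $t_{i+1}$, exactly one vertical line is crossed, namely the left boundary of $t_{i+1}$, which is the right side of one square $q^-$ or the left side of one square $q^+$ (by the distinct-$x$-coordinate assumption, not both). A square intersects both $t_i$ and $t_{i+1}$ unless it ``ends'' at $q^-$ (drops out) or ``begins'' at $q^+$ (enters). Thus $Q_{i+1}$ equals $Q_i$, $Q_i \setminus \{q^-\}$, or $Q_i \cup \{q^+\}$, which are exactly the three edge types. Concatenating these edges for $i = 0, \dots, k-1$ gives the path $\delta$ from $v_0(\emptyset)$ to $v_k(\emptyset)$.

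**Reverse direction (path $\Rightarrow$ solution).** Given a path $\delta$ with vertices $v_i(Q_i)$, set $S'_j = \bigcup_i Q_i$. Coverage of $P_j$ is immediate: every point $p \in P_j$ lies in some strip $t_i$, and condition (i) at that vertex guarantees $p$ is covered by $Q_i \subseteq S'_j$. The substantive point, and the main obstacle, is bounding the ply of $S'_j$ by $\ell$ even though $S'_j$ is a union of many sets each only individually known to have ply $\le \ell$. Here I would fix an arbitrary point $x \in \mathbb{R}^2$ and argue that all squares of $S'_j$ containing $x$ already appear together in a single $Q_i$. The idea is that $x$ lies in some strip $t_{i_0}$ (or on a strip boundary, handled by taking a limit / the closed-square convention), and any square $q \in S'_j$ containing $x$ must intersect $t_{i_0}$, hence was present in the vertex set at the moment the path was at $t_{i_0}$. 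So I must verify the invariant that $Q_{i_0}$ contains \emph{every} square of $S'_j$ that intersects $t_{i_0}$ --- equivalently, a square is never dropped and later re-added, nor added and later dropped across a span of strips all of which it intersects. This follows from the monotone structure of the edges: a square enters $S'_j$ exactly when the path crosses its left side and leaves exactly when it crosses its right side, so while the path is inside any strip the square intersects, the square is in $Q_i$. Consequently the set of squares of $S'_j$ containing $x$ is contained in $Q_{i_0}$, whose ply is at most $\ell$ by condition (ii), giving ply$(S'_j) \le \ell$.

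The delicate step to write carefully is this last monotonicity/invariant argument: I would phrase it as a lemma stating that for each square $q \in S'_j$ and each strip $t_i$ that $q$ intersects, $q \in Q_i$. This is proved by noting the three edge types only remove the unique square ending at the crossed boundary and only add the unique square starting there, so along the path the ``active'' squares at strip $t_i$ are precisely those whose $x$-extent contains $t_i$; combined with the fact that a square containing a point $x \in t_{i_0}$ must span $t_{i_0}$, every square of $S'_j$ through $x$ sits in the single low-ply set $Q_{i_0}$. Everything else (coverage, cardinality, the edge-type case analysis) is routine bookkeeping under the distinct-$x$-coordinate assumption.
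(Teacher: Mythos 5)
Your proposal is correct and follows essentially the same route as the paper: the same vertex/edge verification in the forward direction, and the same key invariant in the reverse direction (every square of $S'_j$ that intersects $t_i$ already lies in $Q_i$, so the ply at any point in an open strip is controlled by a single vertex set). The only cosmetic difference is at strip boundaries, where the paper invokes the nesting $Q_i\subseteq Q_{i+1}$ or $Q_i\supseteq Q_{i+1}$ while you perturb into an adjacent open strip; both work under the distinct-$x$-coordinate assumption.
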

\begin{proof}
	For the first direction, consider a solution $S^*_j$ with ply at most $\ell$.
	For $0\leqslant i\leqslant k$, let $Q^*_i$ be the squares of $S^*_j$ that intersect $t_i$. 
	Observe that any point of $P_j$ that lies in $t_i$ is covered by a square that intersects $t_i$. Thus $Q^*_i$ covers all points in $t_i$.
	Since the ply of $S^*_j$ is at most $\ell$, the set $Q^*_i$ has at most $3\ell$ squares by Lemma~\ref{continues-square-lemma}. Therefore, $Q^*_i$ satisfies conditions (i) and (ii) and thus $v_i(Q^*_i)$ is a vertex of $G$.  Since (by our initial assumption) no two squares begin or end at the same $x$-coordinate, $Q^*_i$ and $Q^*_{i+1}$ differ by at most one square, 
	and thus there is an edge from $v_i(Q^*_i)$ to $v_{i+1}(Q^*_{i+1})$ in $G$. 
	Since this holds for every $i$, the solution $S^*_j$ can be
	mapped to the path
	$v_0(\emptyset), v_1(Q^*_1), \dots, v_{k-1}(Q^*_{k-1}), v_k(\emptyset)$ in $G$. Therefore, $\delta$ exists.
	
	For the other direction, observe that the edges of $G$ only connect the vertices of adjacent strips. Thus, $\delta$ contains exactly one vertex, say $v_i(Q_i)$, for each strip $t_i$, with $0\leqslant i\leqslant k$.  With this notation, we have that $S'_j=\bigcup_i Q_i$. By condition (i) the set $Q_i$ covers all points of $P_j$ that lie in $t_i$. 
	By our assumption of no coinciding $x$-coordinates, every point of $P_j$ lies in some (open) strip, and thus $S'_j$ covers all points of $P_j$. Now we verify the ply of $S'_j$.
	To that end, fix an arbitrary point $p\in \mathbb{R}^2$ and notice that $p$ can be in some strip $t_i$ or on the boundary between two strips $t_i$ and $t_{i+1}$. If $p$ is in $t_i$, then by condition (ii) it has membership at most $\ell$ in $Q_i$. Therefore it also has membership at most $\ell$ in $S'_j$ because by our definition of edges
	no square in $S'_j\setminus Q_i$ can intersect $t_i$. Assume now that  $p$ lies on the boundary between $t_i$ and $t_{i+1}$.  Any square of $S'_j$ that contains $p$ must belong to $Q_i$ or $Q_{i+1}$ (or both).
	Furthermore, by our construction of $G$, we have $Q_i\subseteq Q_{i+1}$ or $Q_i \supseteq Q_{i+1}$. Since $p$ 
	has membership at most $\ell$ in both $Q_i$ and $Q_{i+1}$ (by condition (ii)), the membership of $p$ in $S'_j$ is at most $\ell$. Therefore, the ply of $S'_j$ is at most $\ell$.
\end{proof}

\noindent{\bf Remark 1.}
Our algorithm does not use the fact that elements of $S$ are squares, but only uses that they have unit height.  Therefore the algorithm extends to axis-aligned unit-height rectangles. 

\vspace{8pt}
\begin{wrapfigure}{r}{2.3in} 
	\vspace{-10pt} 
	\centering
	\includegraphics[width=.33\columnwidth]{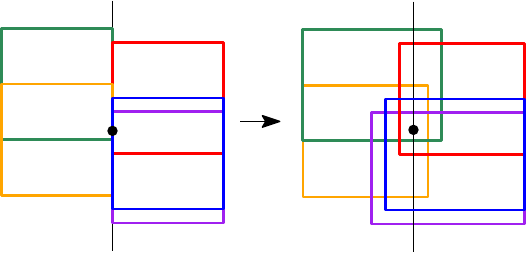}
	\vspace{-5pt} 
\end{wrapfigure}
\noindent{\bf Remark 2.}
The case where sides of squares and/or input points have coinciding $x$-coordinates can be handled 
by a symbolic perturbation.
At any $x$-coordinate $X$ we order first all the left sides of squares at $X$,
breaking ties by their $y$-ordering; then the input points at $X$; and finally,
all the right sides of squares at $X$, breaking ties by their $y$-ordering; see the figure to the right for illustration.
\vspace{8pt}

For the running time to solve the problem in $H_j$, set $n_j=|P_j|$ and $m_j=|S_j|$. Recall that every vertex of $G$ corresponds to a set $Q$ of at most $3\ell$ squares by our construction, and thus there are $O(m_j^{3\ell})$ such sets $Q$. This and the fact that each set $Q$ could be used repeatedly among $O(m_j)$ strips, imply that $G$ has $O(m_j^{3\ell+1})$ vertices. Since every vertex has at most three outgoing edges, the number of edges of $G$ is also $O(m_j^{3\ell+1})$.
By an initial sorting of the points of $P_j$ and the squares of $S_j$ with respect to the $y$-axis, conditions (i) and (ii) can be verified in $O(\ell+n_j)$ time for each vertex. 
A path $\delta$ can be found in time linear in the size of $G$. Thus, the total running time to solve the MPC problem in slab $H_j$ is $O((\ell +n_j) \cdot m_j^{3\ell+1})$.

\begin{theorem}
	\label{thm:height_2}
	There exists a polynomial-time algorithm that solves the problem of minimum ply covering of points in a slab of height two with unit-height rectangles, provided that the optimal objective value is constant.
\end{theorem}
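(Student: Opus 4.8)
The plan is to turn the construction of the directed acyclic graph $G$, together with the correctness claim proved above, into an exact decision routine for a fixed candidate ply value, and then to search over that value to recover the true minimum. For a fixed integer $\ell$, I would build $G=G_\ell$ exactly as described; note its vertex set depends on $\ell$ both through the cardinality bound $3\ell$ supplied by Lemma~\ref{continues-square-lemma} and through condition (ii). I would then test whether a directed path from $v_0(\emptyset)$ to $v_k(\emptyset)$ exists. By the preceding claim, such a path exists if and only if the instance $(P_j,S_j)$ admits a covering of ply at most $\ell$, and in the positive case the associated set $S'_j=\bigcup_i Q_i$ is such a covering. Thus $G_\ell$ yields an \emph{exact} decision-and-construction procedure for each threshold $\ell$, not merely an approximation.

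To obtain the minimum ply rather than test a single threshold, I would run this routine for increasing integer values $\ell=0,1,2,\dots$ and output the set $S'_j$ produced for the smallest $\ell$ admitting a source-to-sink path; call this value $\ell^*$. Correctness of the search follows immediately from the ``if and only if'' above: no $\ell<\ell^*$ admits a covering while $\ell^*$ does, so $\ell^*$ is exactly the minimum ply and the returned $S'_j$ is an optimal solution. The loop is well defined and terminating because the instance is either infeasible --- detectable in advance by checking whether $\bigcup S_j$ covers $P_j$ --- or feasible, in which case selecting all of $S_j$ already certifies $\ell^*\le m_j$, so the search halts by $\ell=m_j$ at the latest.

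For the running time I would invoke the analysis established above: $G_\ell$ has $O(m_j^{3\ell+1})$ vertices and edges, conditions (i) and (ii) are checkable in $O(\ell+n_j)$ time per vertex after an initial sort along the $y$-axis, and a path is found in time linear in $|G_\ell|$, so one iteration costs $O((\ell+n_j)\,m_j^{3\ell+1})$. Since the optimal objective value $\ell^*$ is assumed constant, the search performs only $O(\ell^*)=O(1)$ iterations, each with $\ell\le\ell^*=O(1)$, and hence the total time is a fixed polynomial in $n_j$ and $m_j$. By Remark~1 the construction never used squareness, only unit height, so the whole argument applies verbatim to axis-aligned unit-height rectangles, which is precisely the generality claimed in Theorem~\ref{thm:height_2}.

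I expect the only delicate point to be the reduction to the non-degenerate case. The construction assumed that the vertical sides of all rectangles have distinct $x$-coordinates and that no input point lies on such a side; Remark~2 removes these assumptions by a symbolic perturbation that merely fixes a strict total order on the events at each $x$-coordinate. The step deserving care is verifying that this perturbation does not change the quantity being optimized: because it moves nothing geometrically and only breaks ties in the strip ordering, the ply of any subset of $S_j$ --- a genuinely geometric quantity measured over all points of the plane --- is identical in the perturbed and original instances, so the minimum ply is preserved. Establishing this invariance, and confirming that the ordering prescribed in Remark~2 indeed produces the intended open strips containing every input point, is where I would concentrate the argument; the remainder is a direct assembly of Lemma~\ref{continues-square-lemma}, the claim, and the running-time bound.
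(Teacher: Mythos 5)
Your proposal is correct and follows essentially the same route as the paper: the DAG $G_\ell$ over vertical strips, the correspondence between source-to-sink paths and ply-at-most-$\ell$ solutions established by the claim, and the $O((\ell+n_j)\,m_j^{3\ell+1})$ bound. The only addition is that you make explicit the outer search over $\ell=0,1,2,\dots$ (which the paper leaves implicit in its standing assumption that the minimum ply is bounded by $\ell$), and that is a sound and natural completion.
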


As discussed at the beginning of this section, the union of the solutions of all slabs is a 2-approximate solution for the original problem of covering $n$ points in the plane with $m$ unit squares. 
Since every point belongs to exactly one slab and every square belongs to at most two slabs, this 2-approximate
solution can be computed in $\sum{(\ell+ n_j)  \cdot m_j^{3\ell+1}}=O((\ell+ n)  \cdot (2m)^{3\ell+1})$ time, where the sum runs over all slabs.
The following theorem summarizes our result.

\begin{theorem}
	There exists a polynomial-time 2-approximation algorithm that solves the problem of minimum ply covering of points with unit-height rectangles, provided that the optimal objective value is constant.
\end{theorem}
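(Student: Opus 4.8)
The plan is to combine the per-slab algorithm of Theorem~\ref{thm:height_2} with the horizontal slab decomposition described at the start of this section, and to verify three things separately: that the union of the per-slab solutions is a feasible covering, that its ply is at most twice the optimum, and that the whole procedure runs in polynomial time when the optimum is constant. Since we are not handed the value of the optimum $\ell$ in advance, I would run the construction for trial values $\ell = 1, 2, 3, \dots$, invoking the slab procedure on each $(P_j, S_j)$ at each trial value, and stop at the smallest $\ell$ for which every slab is feasible. Correctness of this search rests on the observation already recorded above: if $S^*$ is a global solution of ply $\ell$, then for each $j$ the set $S^* \cap S_j$ covers $P_j$ (every point of $P_j$ lies in $H_j$ and so is covered by a square meeting $H_j$) and has ply at most $\ell$ (deleting squares never increases ply). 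Hence each slab instance is feasible at level $\mathrm{OPT}$, so the search halts exactly at $\ell = \mathrm{OPT}$.

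Next I would analyze the output $S' = \bigcup_j S'_j$. Feasibility is immediate, since each $S'_j$ covers $P_j$ and the sets $P_j$ partition $P$. For the ply bound, the decisive geometric fact is that a unit-\emph{height} rectangle has height one while the slabs have height two, so any rectangle meets at most two \emph{consecutive} slabs $H_j$ and $H_{j+1}$. Consequently any point $p$ of the plane is covered only by squares of $S'_j \cup S'_{j+1}$ for some $j$, and since each of $S'_j$ and $S'_{j+1}$ has ply at most $\ell$, the membership of $p$ in $S'$ is at most $2\ell = 2\,\mathrm{OPT}$. This establishes the $2$-approximation and, incidentally, is the only step that uses the unit-height hypothesis, which is why the result applies to unit-height rectangles and not merely to squares (cf.\ Remark~1).

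For the running time I would invoke the per-slab bound $O((\ell + n_j)\, m_j^{3\ell+1})$ from Theorem~\ref{thm:height_2}, sum over slabs using $\sum_j n_j = n$ and $\sum_j m_j \le 2m$ (each square lies in at most two slabs), and absorb the constant number of trial values of $\ell$ into the constant factor. The dominant term is $O((\ell + n)\,(2m)^{3\ell+1})$, which is polynomial in $n$ and $m$ once $\ell$ is bounded by a constant.

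The step that genuinely carries the argument is the ply bound, and in particular the claim that membership accumulates across at most two slabs; everything else (feasibility, the search over $\ell$, the running-time summation) is bookkeeping. The one place demanding care is the behaviour at slab boundaries: because the boundaries are chosen so that no point of $P$ and no rectangle edge lies on one, the assignment of each rectangle and each point to slabs is unambiguous, and the ``at most two slabs'' statement holds exactly rather than up to boundary effects. With that in hand, the three items above compose directly into the stated theorem.
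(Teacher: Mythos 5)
Your proof is correct and follows the paper's approach exactly: the height-2 slab decomposition, the per-slab algorithm of Theorem~\ref{thm:height_2}, the union of slab solutions with the ``each unit-height rectangle meets at most two consecutive slabs'' argument giving ply at most $2\ell$, and the same running-time summation $\sum_j (\ell+n_j)\,m_j^{3\ell+1} = O((\ell+n)(2m)^{3\ell+1})$. One small quibble that does not affect correctness: your incremental search halts at the smallest $\ell$ for which \emph{every slab} is individually feasible, which may be strictly smaller than the global optimum rather than ``exactly'' equal to it --- but this only helps, since the output then has ply at most $2\ell \leqslant 2\,\mathrm{OPT}$.
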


\section{Minimum Ply Covering with Unit Disks}
\label{disk-section}
In this section we study the MPC problem on unit disks, i.e., disks with diameter 1. Given a set
$P$ of $n$ points and a set $S$ of $m$ unit disks, both in the plane, the goal is to find a subset $S'$ of $S$, with minimum ply, that covers all points of $P$. 
This problem cannot be approximated in polynomial-time by a ratio better than 2; see Section~\ref{nphard-proof}. 
We present a 2-approximation algorithm
that takes polynomial time if the minimum ply is bounded by a constant. In the
rest of this section we assume that the minimum ply is bounded by $\ell$.
Our algorithm is a modification of that of unit squares. 
After a suitable rotation of the plane we assume that in the set consisting of points of $P$ together with the leftmost and rightmost points of disks in $S$, no two points have the same $x$-coordinate.

As in the previous section we partition the plane into horizontal slabs of height 2. Then we solve the MPC problem in every slab $H_j$ by constructing a directed acyclic graph $G$. Let $P_j$ be the set of points in $H_j$, and let $S_j$ be the set of all disks that intersect $H_j$. We partition the plane into vertical strips $t_0,\dots,t_k$ by vertical lines through the leftmost and rightmost points of disks in $S_j$. 
The only major change to the algorithm of the previous section is the definition of vertices of $G$, because Lemma~\ref{continues-square-lemma} does not hold for unit disks. For unit disks we have the following helper lemma.

\begin{lemma}
	\label{continues-disk-lemma}
	Let $S^*_j\subseteq S_j$ be any solution, with ply at most $\ell$, for the MPC problem. Then for any strip $t_i$ at most $8\ell$ disks in $S^*_j$ intersect $t_i$.
\end{lemma}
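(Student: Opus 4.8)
The plan is to mirror the proof of Lemma~\ref{continues-square-lemma}: identify a small set of points that every disk crossing $t_i$ must contain, and then invoke the ply bound together with the pigeonhole principle. The first step is to pin down where such disks can lie. Writing $w$ for the width of $t_i$ and choosing coordinates so that $t_i$ is the open strip $0<x<w$ and the slab $H_j$ is $0<y<2$, I would first observe that, exactly as for squares, no disk has its leftmost or rightmost point in the interior of $t_i$; hence any disk that meets the open strip must have its leftmost point at $x\le 0$ and its rightmost point at $x\ge w$, i.e.\ it spans $t_i$ horizontally. Since a unit disk has diameter $1$, this already forces $w\le 1$ (if $w>1$ no disk can span $t_i$ and the bound is trivial), and it confines the centre $(x_c,y_c)$ to $x_c\in[w-\tfrac12,\tfrac12]$, an interval of length $1-w\le 1$. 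Because $S^*_j\subseteq S_j$, every disk under consideration also meets the slab, which gives $y_c\in(-\tfrac12,\tfrac52)$, an interval of length $3$. Thus all centres of disks of $S^*_j$ that cross $t_i$ lie in an axis-aligned box $R$ of width at most $1$ and height $3$.

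The second step reduces the counting to a covering problem. A unit disk contains every point within distance $\tfrac12$ of its centre, so a fixed point $p$ is contained in precisely those disks whose centres lie within distance $\tfrac12$ of $p$. Consequently, if I can find a set $\Pi$ whose radius-$\tfrac12$ disks cover $R$ — equivalently, a set $\Pi$ that covers $R$ with covering radius $\tfrac12$ — then every disk of $S^*_j$ crossing $t_i$ contains at least one point of $\Pi$. Since the ply of $S^*_j$ is at most $\ell$, each point of $\Pi$ lies in at most $\ell$ disks of $S^*_j$, and the pigeonhole principle bounds the number of disks crossing $t_i$ by $\ell\,|\Pi|$. To obtain the claimed bound it therefore suffices to cover $R$ with $8$ points of covering radius at most $\tfrac12$.

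For the third step I would place a $2\times 4$ grid of points on $R$: two columns across the width (at the quarter-points of the $x$-range) and four rows across the height $3$ (spacing $\tfrac34$). The horizontal half-spacing is then at most $\tfrac14$ and the vertical half-spacing is $\tfrac38$, so the farthest any point of $R$ can be from the nearest grid point is $\sqrt{(1/4)^2+(3/8)^2}=\sqrt{13}/8<\tfrac12$. Hence these $8$ points form the required covering, every disk of $S^*_j$ crossing $t_i$ contains one of them, and at most $8\ell$ such disks exist.

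The crux, and the point where the disk case genuinely departs from the square case, is the first step. For squares the three points $a,b,c$ can sit on a single vertical line because a square spanning the strip cuts that line in a segment of length exactly $1$; for disks the corresponding chord can be arbitrarily short (a disk may barely clip a thin strip), so no fixed set of points on one vertical line can stab all crossing disks. The right move is to stop reasoning on a single line and instead work in the two-dimensional space of disk centres: the spanning property together with membership in $S_j$ traps these centres in a bounded $1\times3$ box regardless of how thin $t_i$ is, after which the covering/pigeonhole argument is routine. The only minor care needed is to position the grid relative to the actual box $R$ and to dispose of the degenerate case $w>1$ (no crossing disk) separately.
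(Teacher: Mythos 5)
Your proof is correct and follows essentially the same route as the paper's: both confine the centres of the crossing disks to a $1\times 3$ box (the paper uses the fact that each such disk meets a vertical line inside $t_i$, you use the equivalent spanning observation) and then stab all of them with the same $2\times 4$ grid of $8$ points at horizontal/vertical half-spacings $\tfrac14$ and $\tfrac38$, finishing by pigeonhole with the ply bound. Your explicit $\sqrt{13}/8<\tfrac12$ computation and the handling of the degenerate wide-strip case are details the paper leaves implicit, but the argument is the same.
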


\begin{wrapfigure}{c}{1.9in} 
	\vspace{-8pt} 
	\centering
	\includegraphics[width=.27\columnwidth]{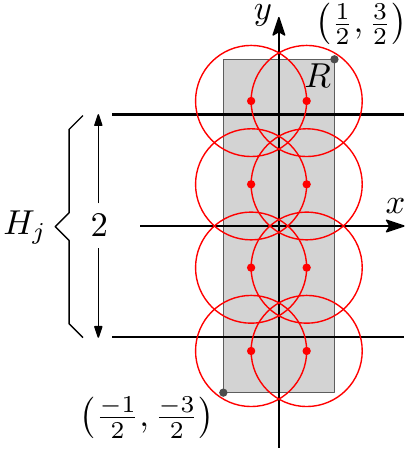}
	\vspace{-5pt} 
\end{wrapfigure}
\noindent{\em Proof.}
After a suitable translation assume that $H_j$ has $y$-range $[-1,+1]$, and assume that the $y$-axis lies in $t_i$.  Any disk $s\in S_j$, that intersects $t_i$, must also intersect the $y$-axis because boundaries of vertical strips are defined by vertical lines through leftmost and rightmost points of disks in $S_j$. It follows that the center of $s$ must lie within a rectangle $R$ with corners
$(\pm\frac{1}{2}, \pm\frac{3}{2}$); see the figure to the right.  The rectangle $R$ can be covered by eight unit disks $\{D_1,\dots,D_8\}$ that are centered at eight points $\{p_1,\dots,p_8\}=\{ (\pm\frac{1}{4},\pm\frac{3}{8}),(\pm\frac{1}{4},\pm\frac{9}{8})\}$ respectively---the red points in the figure.   
Thus, the center of $s$ lies in a disk $D_i$, for some $i\in\{1,\dots,8\}$, and hence at distance at most $\frac{1}{2}$ from $p_i$. This implies that $p_i\in s$.  Thus, each disk in $S_j$ that intersects $t_i$ contains at least one of the points $\{p_1,\dots,p_8\}$. Since $S^*_j$ has ply at most $\ell$, each point $p_i$ lies in at most $\ell$ disks of $S^*_j$. Therefore,
at most $8\ell$ disks of $S^*_j$ intersect $t_i$.\qed
\vspace{8pt}

For every strip $t_i$ we introduce a set $V_i$ that contains vertices $v_i(Q)$, for all sets $Q$ of at most $8\ell$ disks that (i) intersect $t_i$, (ii) cover all points of $P_j$ in $t_i$, and (iii) have ply at most $\ell$. Then we connect the vertices of $V_i$ to the vertices of $V_{i+1}$ in a similar fashion as for unit squares.

Using Lemma~\ref{continues-disk-lemma}, we can claim (similar to that of unit squares) that any path from the source to the sink in $G$ corresponds to a solution with ply at most $\ell$ for the MPC problem with input $P_j$ and $S_j$.   
Therefore, we can compute in $O((\ell +n_j) \cdot m_j^{8\ell+1})$ time a solution with ply at most $\ell$ for every slab, and in $O((\ell +n) \cdot (2m)^{8\ell+1})$ time a 2-approximate solution for the original problem.
The following theorem summarizes our result.

\begin{theorem}
	There exists a polynomial-time 2-approximation algorithm that solves the problem of minimum ply covering of points with unit disks, provided that the optimal objective value is constant.
\end{theorem}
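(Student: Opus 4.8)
The plan is to mirror the unit-squares algorithm almost verbatim, replacing only the combinatorial bound on how many objects can cross a single vertical strip. The overall scaffolding carries over without change: partition the plane into horizontal slabs of height $2$, solve the MPC problem inside each slab independently, and take the union of the per-slab solutions. As argued at the start of Section~\ref{square-section}, any disk belongs to the slabs it intersects, and a disk of diameter $1$ can meet at most two consecutive height-$2$ slabs; hence a point of the plane is covered only by disks drawn from the two solutions $S_j'$ and $S_{j+1}'$, giving ply at most $2\ell$ once each per-slab solution has ply at most $\ell$. So the theorem reduces to solving MPC exactly inside a single slab $H_j$ in polynomial time when $\ell$ is constant.

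For a fixed slab I would construct the same directed acyclic graph $G$ as in the square case. Sort the leftmost and rightmost points of the disks in $S_j$ by $x$-coordinate (the rotation assumed at the start of the section guarantees distinct $x$-coordinates among these and the points of $P_j$), and use the $2k\le 4m_j$ vertical lines through them to cut $H_j$ into open vertical strips $t_0,\dots,t_k$. For each strip $t_i$ let $V_i$ hold a vertex $v_i(Q)$ for every set $Q$ of disks that (i) intersect $t_i$, (ii) cover all of $P_j\cap t_i$, and (iii) have ply at most $\ell$. The crucial quantitative input is Lemma~\ref{continues-disk-lemma}, which caps at $8\ell$ the number of disks of any ply-$\ell$ solution that intersect a given strip; this lets me restrict each $Q$ to at most $8\ell$ disks, so $|V_i|=O(m_j^{8\ell})$. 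Edges go from $V_i$ to $V_{i+1}$ exactly as before: keep $Q$ unchanged, delete the disk whose rightmost point sits on the left boundary of $t_{i+1}$, or add the disk whose leftmost point sits there. As the disks move left-to-right, crossing a strip boundary either introduces or retires exactly one disk (distinct $x$-coordinates again), so consecutive strip-profiles differ by at most one disk and are nested, which is what the correctness proof needs.

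The correctness argument is then identical in structure to the Claim proved for squares. In the forward direction, an optimal solution $S_j^*$ of ply $\le\ell$ induces, for each $i$, the set $Q_i^*$ of its disks meeting $t_i$; by Lemma~\ref{continues-disk-lemma} each $Q_i^*$ has at most $8\ell$ disks and satisfies (i)--(iii), and consecutive sets differ by one disk, so $S_j^*$ maps to a source-to-sink path. In the reverse direction, any such path yields $S_j'=\bigcup_i Q_i$; condition (ii) gives coverage of every point of $P_j$ (each lies in some open strip), and the ply check splits into a point interior to some $t_i$, where condition (iii) on $Q_i$ bounds its membership and no disk outside $Q_i$ reaches $t_i$, and a point on a strip boundary, where the nestedness $Q_i\subseteq Q_{i+1}$ or $Q_i\supseteq Q_{i+1}$ lets the ply bound for the larger of the two sets control the membership. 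The running time is dominated by the size of $G$: there are $O(m_j^{8\ell+1})$ vertices and $O(m_j^{8\ell+1})$ edges, conditions (i)--(iii) test in $O(\ell+n_j)$ time per vertex after presorting by $y$, and a path is found in linear time, for $O((\ell+n_j)\cdot m_j^{8\ell+1})$ per slab and $O((\ell+n)\cdot(2m)^{8\ell+1})$ overall.

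The one genuinely new step, and the only place the disk geometry intrudes, is Lemma~\ref{continues-disk-lemma}; everything else is a transcription of the square argument. The potential obstacle I would watch is that the strip-crossing logic still behaves cleanly for disks: a disk intersects exactly those strips between the line through its leftmost point and the line through its rightmost point, so its strip-membership really is a contiguous interval that starts and ends at a single boundary crossing. This is what makes the ``add one disk / drop one disk'' edge set exhaustive and keeps consecutive profiles nested. Once that is confirmed, no disk-specific reasoning is needed beyond the $8\ell$ bound, and the theorem follows.
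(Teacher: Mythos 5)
Your proposal is correct and follows essentially the same route as the paper: the same slab decomposition giving the factor-$2$ loss, the same strip-based DAG with per-strip disk profiles of size at most $8\ell$ justified by Lemma~\ref{continues-disk-lemma}, the same edge rules and correctness argument transcribed from the square case, and the same $O((\ell+n)\cdot(2m)^{8\ell+1})$ running time. No substantive differences to report.
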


\section{$3$-Colorable Unit Disk Covering}
\label{appB}

\begin{wrapfigure}{c}{1.2in} 
	\vspace{-8pt} 
	\centering
	\includegraphics[width=.18\columnwidth]{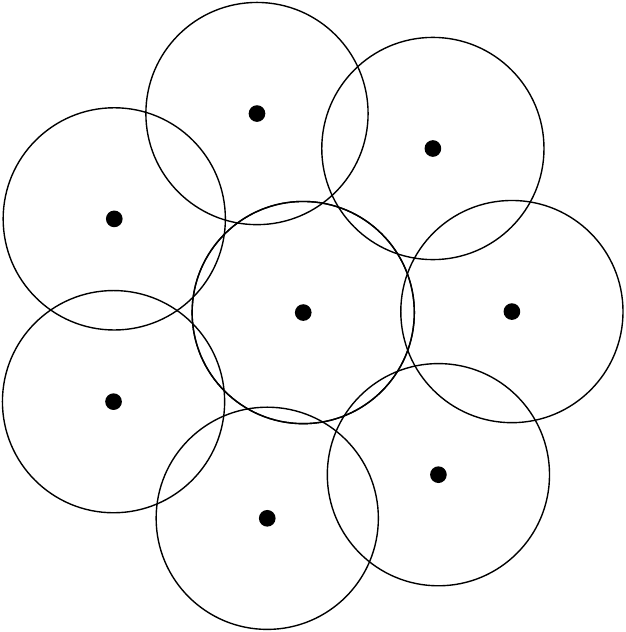}
	\vspace{-5pt} 
\end{wrapfigure}
In this section we study the 3-colorable unit disk covering problem. Given a set $P$ of $n$ points and a set $S$ of $m$ unit disks, both in the plane, the goal is to find a subset $S'$ of $S$ that covers all points of $P$ and such that $S'$ can be partitioned into $\{S'_1, S'_2,S'_3\}$ where the disks in each $S'_a$, $a\in\{1,2,3\}$, are pairwise disjoint, i.e., the ply of each $S'_a$ is 1. 
Although at first glance this problem seems to be a special case of the MPC problem with ply $\ell=3$,
they are different because there are input instances that have a solution with ply at most $3$ but do not have any 3-colorable solution; see for example the input instance in the figure to the right.
We present a polynomial-time algorithm that achieves a 2-approximate solution, i.e., a solution that is 6-colorable.

Before presenting our algorithm we point out a related problem that is the problem of coloring a unit disk graph (the
intersection graph of a set $S$ of unit disks in the plane) with $k$ colors.  This problem NP-hard for any
$k\geqslant  3$ \cite{Clark1990}. There are 3-approximation algorithms for this problem (see e.g. \cite{Graf1998,Peeters1991}), and a 2-approximation algorithm when the unit disk graph has constant clique number (see \cite[Chapter 4, Proposition 4.8]{Graf1995}). We note that this problem is also different from our 3-colorable unit disk cover problem; for
example if all the disks in $S$ have a common intersection and we place our point set $P$ in this intersection, then there
exists a 1-colorable solution for our problem, while the unit disk graph is not
$(|S| - 1)$-colorable.
The NP-hardness of 3-coloring a unit disk graph \cite{Clark1990} immediately
implies the NP-hardness of the 3-colorable unit disk cover problem: every disk in the reduction of \cite[Theorem 2.1]{Clark1990} covers a unique region of the plane, so by placing points of $P$ in those regions we can enforce the solution $S'$ to contain all disks of $S$.  

Now we present our algorithm, which is a modified version of the algorithm of Section \ref{disk-section}. 
Again we partition the plane into horizontal slabs of height 2. Then for every slab $H_j$ we test the existence of a 3-colorable covering for points $P_j$ using disks in $S_j$. If this test fails for some $j$, then there is no 3-colorable solution for $P$ and $S$. If this test is successful for all $j$, then we assign colors 1, 2, 3 to solutions of $H_1, H_3,\dots$ and assign colors 4, 5, 6 to solutions of $H_2,H_4,\dots$. The union of these solutions will be a 6-colorable solution for the original problem. 

After a suitable rotation we assume that in the set  consisting of  points of $P$ together with the leftmost and rightmost points of disks in $S$, no two points have the same $x$-coordinate.
To solve the problem for every $H_j$, as in Section \ref{disk-section}, we partition the plane into vertical strips $t_0,\dots,t_k$. Then we construct a directed acyclic graph $G$ such that any path from the source to the sink in $G$ corresponds to a 3-colorable solution for $H_j$.   Consider a 3-colorable solution $S^*=S_1^*\cup S_2^* \cup S_3^*$. The disks in each $S_a^*$ (for $a=1,2,3$) are pairwise disjoint, and thus each $S_a^*$ has ply 1. Therefore, by Lemma~\ref{continues-disk-lemma} at most $8$ disks in each $S^*_a$ intersect each strip $t_i$.
Based on this, for every $t_i$ we introduce a set $V_i$ containing vertices $v_i(Q_1,Q_2,Q_3)$ for all sets $Q_1$, $Q_2$, and $Q_3$ that satisfy all following conditions:

\begin{enumerate}[leftmargin=2.5em]
	\item[(i)] each of $Q_1$, $Q_2$, and $Q_3$ contains at most $8$ disks that intersect $t_i$,
	\item[(ii)] the disks in each of $Q_1$, $Q_2$, and $Q_3$ are pairwise disjoint, and
	\item[(iii)] $Q_1\cup Q_2\cup Q_3$ covers all points of $P_j$ that lie in $t_i$.
\end{enumerate}

We connect a vertex $v_i(Q_1,Q_2,Q_3)$ to a vertex $v_{i+1}(Q'_1,Q'_2,Q'_3)$ if one of the following conditions hold:

\begin{itemize}
	\item for every index $a$ in $\{1,2,3\}$ we have $Q'_a=Q_a$, or 
	\item for exactly one index $a$ in $\{1,2,3\}$ we have $Q'_a=Q_a\setminus\{d\}$, where $d$ is the disk whose rightmost point is on the
	left boundary of $t_{i+1}$, and for every other index $b$ we have $Q'_b=Q_b$, or
	\item for exactly one index $a$ in $\{1,2,3\}$ we have $Q'_a=Q_a\cup\{d\}$, where $d$ is the disk whose leftmost point is on the
	left boundary of $t_{i+1}$, and for every other index $b$ we have $Q'_b=Q_b$.
\end{itemize}

We briefly justify that paths from the source to the sink in $G$ correspond to 3-colorable solutions.   Fix a path $\delta$.
For $a=1,2,3$, let $S'_a$ be the union of the disks in all sets $Q_a$ associated with vertices in $\delta$. Condition (iii) ensures that $S'_1\cup S'_2\cup S'_3$ covers all points of $P_j$. Condition (ii) ensures that the disks in each of $S'_1$, $S'_2$, and $S'_3$ are pairwise disjoint. Thus, $S'_1\cup S'_2\cup S'_3$ is a 3-colorable solution. 
It remains to verify the existence of such a path $\delta$ in $G$. Consider a 3-colorable solution $S^*=S_1^* \cup S_2^* \cup S_3^*$. For any $a\in \{1,2,3\}$ and $i\in\{0,\dots,k\}$ let $Q^i_a$ be the disks in
$S_a^*$ that intersect $t_i$.  As discussed above, we have $|Q^i_a|\leqslant 8$.
Thus, $v_i(Q^i_1,Q^i_2,Q^i_3)$ is a vertex of $G$. By an argument similar to that of Section~\ref{square-section}, one can verify that $S^*$ can be mapped to a path in $G$. Therefore, $\delta$ exists.

The running time analysis is similar to that of Section \ref{disk-section}, except here we consider 24 disks in every vertical strip (8 disks for every set $Q_i$). 
Therefore the total running time of our 2-approximation algorithm is $O(n m^{25})$.
The following theorem summarizes our result.

\begin{theorem}
	\label{three-color-thr}
	There exists a polynomial-time 2-approximation algorithm for the 3-colorable covering problem of points in the plane with unit disks.
\end{theorem}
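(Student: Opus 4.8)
The plan is to assemble the theorem from the three components developed above: the height-$2$ slab decomposition, the per-slab dynamic program encoded by the directed acyclic graph $G$, and a palette-alternation scheme that glues the per-slab solutions into one $6$-colorable global cover. I would prove correctness of the per-slab routine first, then the gluing step, and finally bound the running time.

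For a single slab $H_j$ I would show that source-to-sink paths in $G$ correspond exactly to $3$-colorable covers of $P_j$ by $S_j$, paralleling the Claim of Section~\ref{square-section}. For the forward direction, given a $3$-colorable cover $S^*=S_1^*\cup S_2^*\cup S_3^*$, I set $Q^i_a$ to be the disks of $S_a^*$ meeting $t_i$; each $S_a^*$ has ply $1$, so Lemma~\ref{continues-disk-lemma} gives $|Q^i_a|\le 8$ and $v_i(Q^i_1,Q^i_2,Q^i_3)$ is a vertex, while the general-position assumption on $x$-coordinates forces consecutive triples to differ by at most one disk in a single class, which is precisely what an edge of $G$ allows. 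For the reverse direction, given a path I set $S_a'=\bigcup_i Q^i_a$; condition (iii) yields coverage of $P_j$, and the crux is that the edge rules insert a disk only at the strip containing its leftmost point and delete it only at the strip containing its rightmost point, so a disk of $S_a'$ belongs to $Q_a^i$ for \emph{every} strip $t_i$ it intersects. Hence two distinct overlapping disks of $S_a'$ would share a point lying in some strip or on a strip boundary and would therefore both lie in a common $Q_a^i$, contradicting the pairwise disjointness of condition (ii). Thus each $S_a'$ is pairwise disjoint and $S_1'\cup S_2'\cup S_3'$ is a $3$-colorable cover of $P_j$.

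I expect the gluing step to be the main obstacle. I would color the three pairwise-disjoint classes of every \emph{odd} slab's solution with $\{1,2,3\}$ and those of every \emph{even} slab with $\{4,5,6\}$, attributing a disk selected in two slabs to just one of them so that each disk gets a single color. To verify that the six classes are each pairwise disjoint, take two disks $d,d'$ of the same color, necessarily attributed to same-parity slabs. If both are attributed to one slab they lie in the same pairwise-disjoint class of that slab's solution. Otherwise they are attributed to slabs $H_j,H_{j'}$ with $|j-j'|\ge 2$; since a disk selected for a slab must intersect it, the vertical extent of $d$ (of length $1$) is confined to within distance $1$ of $H_j$, the symmetric statement holds for $d'$, and the full intervening slab of height $2$ then separates the two extents, making $d$ and $d'$ disjoint (the boundary tangency being excluded by general position). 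Consequently every point of the plane is covered at most once in each color, the union is $6$-colorable, and since each slab admits a $3$-colorable cover whenever $P$ and $S$ do, this yields a factor-$2$ approximation.

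Finally I would bound the running time. Each vertex of $G$ is a triple of sets of at most $8$ disks drawn from $S_j$, so there are $O(m^{24})$ candidate triples per strip and $O(m^{25})$ vertices in all; the out-degree is constant, conditions (i)--(iii) are testable in $O(n)$ time per vertex after an initial sort in $y$, and a source-to-sink path is obtained in time linear in the size of $G$. Summing the per-slab cost over all slabs---each point lying in one slab and each disk in at most two---gives the overall bound $O(n\,m^{25})$ and completes the proof.
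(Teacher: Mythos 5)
Your proposal follows the paper's proof essentially verbatim: the same height-$2$ slab decomposition, the same per-slab DAG over triples $(Q_1,Q_2,Q_3)$ of at most $8$ disks each justified by Lemma~\ref{continues-disk-lemma}, the same alternating color palettes $\{1,2,3\}/\{4,5,6\}$ for odd/even slabs, and the same $O(nm^{25})$ bound. The only difference is that you spell out two details the paper leaves implicit (assigning a single color to a disk selected in two consecutive slabs, and the vertical-separation argument for same-parity slabs), which is a correct elaboration rather than a different route.
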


\section{Minimum Membership and Minimum Ply Coverings with Weighted Intervals}

Nandy, Pandit, and Roy \cite{Nandy2017} studied the geometric MMSC problem in dimension one, where $P$ is a set of $n$ points and $S$ is a set of $m$ closed intervals, both on the real line. 
By formulating this problem as an instance of the maximum independent set problem, they solved it in $O(n+m)$ time, provided that the points of $P$ and the endpoints of intervals in $S$ are given in sorted order. 

Nandy \etal \cite{Nandy2017} also studied the weighted version of the geometric MMSC problem in dimension one. In this version every interval has a weight, and the membership of a point $p$ is defined as the sum of the weights of the intervals in $S'$ that cover $p$. They claimed an $O(nm\log n)$-time algorithm that solves this version of the problem. 
However, as we point out in Section \ref{Nandy}, their algorithm does not always find the optimal solution. 
We present here an algorithm that solves the weighted version of the MMSC problem in dimension one in $O(n + m + M )$ time, where $M\leqslant {m\choose 2}$ is the number of pairs 
of overlapping intervals, i.e., the number of edges of the interval graph formed by the intervals in $S$.
As we will see later in Remark 3, our algorithm can easily be modified to also solve the MPC problem on weighted intervals in the same time.

\subsection{Nandy, Pandit, and Roy's Algorithm}
\label{Nandy}
We briefly review the algorithm of Nandy \etal \cite{Nandy2017},
which uses dynamic programming and proceeds as follows.		
Let $p_1,p_2,\dots, p_n$ be the points of $P$ from left to right, and let $s_1, s_2, \dots, s_m$ be the intervals in $S$ sorted from left to right according to their right endpoints. (In the original presentation of this algorithm in \cite{Nandy2017}, the points of $P$ are ordered from right to left and the intervals in $S$ are also sorted from right to left according to their left endpoints.) 
Let $w(s_i)$ denote the weight of $s_i$. 
For every $k\in\{1,\dots,m\}$ let $Q_k = (P_k, S_k)$ be an instance of the problem, where $S_k=\{s_1,\dots,s_k\}$ and $P_k$ is the set of points of $P$ that lie on or to the left of the right endpoint of $s_k$. Note that $Q_m$ is the original problem.  Now solve $Q_1,\dots,Q_m$ in order, i.e., process interval $s_k$ to find a solution $S'_k$ of $Q_k$, using previously computed 
solutions $S'_1,\dots,S'_{k-1}$ for subproblems $Q_1,\dots,Q_{k-1}$. 
Nandy \etal claim that the following approach will find $S'_k$:

\begin{enumerate}[leftmargin=2.5em]
	\item[(i)] If $P_k=P_{k-1}$ then set $S'_k=S'_{k-1}$.
	\item[(ii)] If $P_k\neq P_{k-1}$ and some points of $P_k\setminus P_{k-1}$ are not covered by $s_k$, then there is no solution for $Q_k$, i.e., $S'_k$ does not exist.
	\item[(iii)] If $P_k\neq P_{k-1}$ and every point of $P_k\setminus P_{k-1}$ is covered by $s_k$, then find the index $i$, with $i\in\{1,\dots,k-1\}$, such that $S'_i\cup\{s_k\}$ covers $P_k$ and the membership of points of $P_k$ with respect to $S'_i\cup\{s_k\}$ is minimum. Then set $S'_k=S'_i\cup\{s_k\}$.
\end{enumerate}

There are two reasons why this algorithm is not correct.  First, 
item (i) does not consider $s_k$, while in some cases we may need to include it in the solution, for example if $s_k$ has the minimum weight and covers all points in $P_k$.  
This may perhaps be fixable with a minor change of formula, but there is major issue in item (iii) which somehow breaks the hope for any straightforward dynamic programming approach for this problem.

\begin{figure}[htb]
	\vspace{-1pt}
	\centering
	\includegraphics[width=.6\columnwidth]{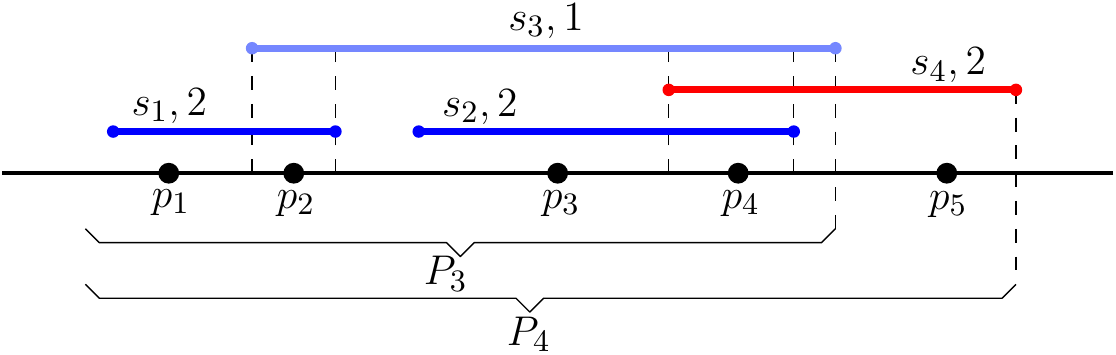}
	\vspace{-1pt}
	\caption{An instance for which the previous algorithm does not compute an optimal solution.  ``$s_1,2$'' indicates that interval $s_1$ has weight 2.}
	\vspace{-1pt}
	\label{counterexample-fig}
\end{figure}

Figure~\ref{counterexample-fig} shows an example with five points $p_1,p_2,p_3,p_4,p_5$, ordered from left to right, and four weighted intervals $s_1,s_2,s_3,s_4$, ordered according to their right endpoints. 
The optimal solutions for $Q_1$, $Q_2$, $Q_3$ are $S'_1=\{s_1\}$, $S'_2=\{s_1,s_2\}$, $S'_3=\{s_1,s_2\}$, with maximum memberships 2, 2, 2. Notice that $S'_3$ should not contain $s_3$ because otherwise it should also contain $s_1$ to cover $p_1$, and this will make the membership of $p_2$ be $w(s_1)+w(s_3)=3$. The recursive computation in item (iii) considers as solution for $Q_4$ the candidates $S'_1\cup \{s_4\}$, $S'_2\cup \{s_4\}$, $S'_3\cup \{s_4\}$. The first one is not valid (it does not cover $p_3$) and the other two have maximum membership $w(s_2)+w(s_4)=4$ (this membership is obtained by $p_4$ which is covered by $s_2$ and $s_4$).  However, the set $S'_4=\{s_1,s_3,s_4\}$ is a solution for $Q_4$ with maximum membership 3. We are skeptical about using a straightforward dynamic programming
for solving the problem, because while $S'_4$ is the optimum solution for $Q_4$, its induced solutions for $Q_2$ and $Q_3$ contain the superfluous interval $s_3$. To use the dynamic	programming approach, a more structured recursion is needed, as described in the next section.

\subsection{Our Algorithm}
\label{algorithm-section}

Now we present our algorithm for the geometric MMSC problem on weighted intervals. 
Our approach is to create a directed acyclic graph (DAG) such that all optimal solutions correspond to directed paths from the source to the sink. More precisely, we construct a vertex-weighted DAG (because the intervals are weighted), and then search for a {\em bottleneck path} from the source to the sink; a bottleneck path is a path that minimizes the maximum weight along the path. Such a DAG could be obtained directly from Section~\ref{square-section}, but we introduce a new construction with better running time.

Let $P$ be a set of $n$ points and let $S=\{s_1,\dots,s_m\}$ be a set of $m$ intervals on a line, where each $s_i$ has weight $w(s_i)$. 
Assume that the points of $P$ and the endpoints of intervals in $S$ are given in sorted order from left to right. In the remainder of this section we show how to transform our problem, in $O(n + m + M )$ time, to an instance of the bottleneck path problem in a DAG of size $O(m + M)$. The bottleneck path problem in a DAG can be solved in linear time.
For simplicity of our description we assume points of $P$ and endpoints of intervals of $S$ have distinct $x$-coordinates; this can be achieved by a symbolic perturbation similar to that of Remark 2 in Section~\ref{square-section}.

\begin{figure}[h]
	\centering
	\includegraphics[width=.999\columnwidth]{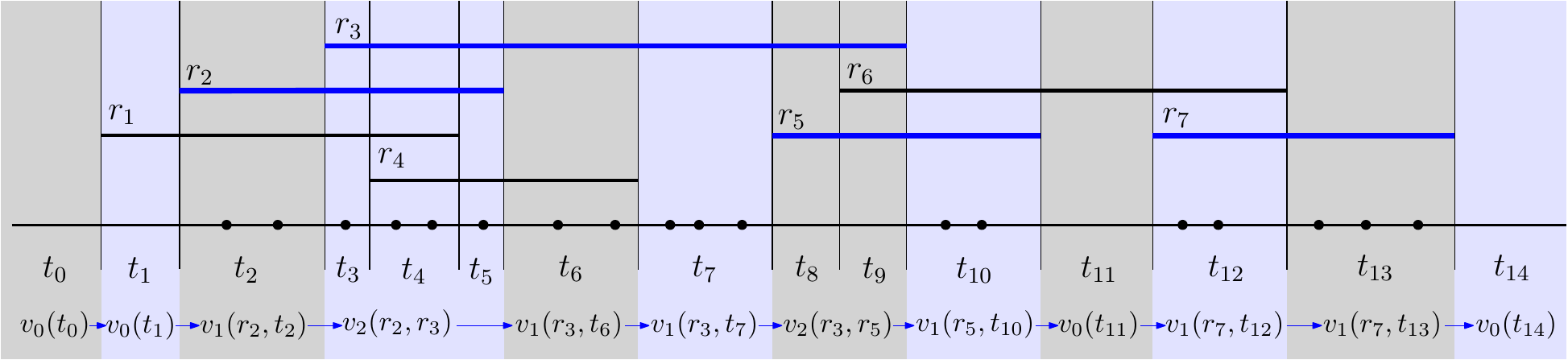}
	\vspace{-1pt}
	\caption{Correspondence between an optimal solution ($\{r_2,r_3,r_5,r_7\}$) and a bottleneck path from $v_0(t_0)$ to $v_0(t_{14})$ in $G$.}
	\vspace{-1pt}
	\label{slabs-fig}
\end{figure}

Draw vertical lines through the endpoints of intervals in $S$ to
partition the plane into vertical strips $t_0,\dots,t_{k}$ (ordered from left to right).
Notice that $k \leqslant 2m$, the leftmost strip $t_0$ is unbounded to the left, and the rightmost strip $t_k$ is unbounded to the right. 
Also $t_0$ and $t_k$ have no points in them. See Figure~\ref{slabs-fig}.
For every $i\in\{1,\dots, k\}$, all points of $P$ in strip $t_i$ belong to the same set of intervals. Thus, it suffices to keep only one of the points in each strip $t_i$ and discard the other points. 
This can be done in $O(n+m)$ time. Thus, we assume that $n\leqslant 2m-1$. 
We say that an interval $s$ is {\em dominated} by an interval $r$ if $r$ starts before $s$ and ends after $s$. The following lemma has been proved in \cite{Nandy2017}.

\begin{lemma}
	\label{two-segment-lemma}
	There exists an optimal solution in which no interval is dominated by some other interval, and no vertical line intersects more than two intervals.
\end{lemma}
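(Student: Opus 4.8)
The plan is to prove Lemma~\ref{two-segment-lemma} by an exchange argument: start from an arbitrary optimal solution and repeatedly modify it until neither forbidden configuration occurs, while never increasing the maximum membership of any point and never uncovering any point of $P$. I would establish the two properties one after the other, because removing dominated intervals should also help control how many intervals a vertical line can stab.

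First I would eliminate dominated intervals. Suppose an optimal solution $S'$ contains an interval $s$ dominated by an interval $r\in S'$, so $r$ starts strictly before $s$ and ends strictly after $s$. The key observation is that everything $s$ covers is also covered by $r$, so every point of $P$ covered by $s$ remains covered after we delete $s$ from the solution. Deleting $s$ can only decrease the membership of every point of the line, since $w(s)>0$, so the maximum membership does not increase and $S'\setminus\{s\}$ is still a feasible solution of no larger objective value. Iterating this, I obtain an optimal solution with no dominated interval; this terminates because the number of intervals strictly decreases at each step.

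Next, assuming no interval is dominated, I would argue that no vertical line stabs three intervals. Suppose some vertical line $\ell$ at $x$-coordinate $X$ meets three intervals $a,b,c$ of the solution. Order them by left endpoint; since none dominates another, their right endpoints must appear in the same order as their left endpoints (if, say, $a$ started before $b$ but ended after $b$, then $a$ would dominate $b$, because both contain $X$ and hence overlap). So among any three intervals through $\ell$, the one with the middle left endpoint has both its endpoints sandwiched, forcing the outer two to nest around it and creating a domination, a contradiction. Making this ``sandwich forces domination'' step precise is the part I expect to require the most care: I would phrase it as, among three pairwise-overlapping intervals through a common point, some one is contained in the span of the others in a way that yields a dominating pair, contradicting the first property.

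The main obstacle is ensuring the two reductions are compatible, namely that enforcing the ``no dominated interval'' property does not reintroduce a triple-stabbed line and vice versa. I would handle this by deriving the second property \emph{as a consequence} of the first rather than as an independent modification: once no interval is dominated, the nesting-free structure of the surviving intervals forces that at most two of them can pairwise overlap at any fixed $x$-coordinate. This ordering-preservation of endpoints for overlapping, non-nested intervals is the technical crux, and it is exactly what makes the single exchange argument deliver both conclusions simultaneously.
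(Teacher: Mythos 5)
Your first step (deleting a dominated interval) is fine: the dominating interval covers everything the deleted one did, and deleting an interval can only decrease memberships, so feasibility and optimality are preserved. The paper itself does not prove this lemma but cites \cite{Nandy2017} for it, so the comparison here is against the standard exchange argument rather than a proof in the text.

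The second half of your argument, however, contains a genuine error: the claim that ``no dominated interval'' forces every vertical line to meet at most two intervals is false. Take $[0,3]$, $[1,4]$, $[2,5]$: all three contain $x=2.5$, their right endpoints appear in the same order as their left endpoints, and no one of them dominates another. Your ``sandwich forces domination'' step conflates two different things: if $a,b,c$ all contain $X$ with $l_a<l_b<l_c$ and $r_a<r_b<r_c$, then $b$ has both endpoints sandwiched, but that means $b\subseteq a\cup c$ (the outer two overlap because both contain $X$, so their union is a single interval containing $b$), \emph{not} that some single interval dominates $b$. The correct repair is a second, separate exchange step: whenever three solution intervals share an $x$-coordinate, the middle one is redundant because every point it covers is covered by one of the outer two, so delete it; this keeps the solution feasible, does not increase any membership, and strictly decreases the number of intervals, so iterating both deletion rules terminates with a solution satisfying both properties. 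The second property is therefore not a consequence of the first; it needs its own redundancy argument.
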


Based on this lemma, it suffices to search for 	
an optimal solution $S^*\subseteq S$ such that the interior of every strip $t_i$ intersects at most two intervals of $S^*$; see Figure~\ref{slabs-fig}. 
We therefore endow our DAG, denoted by $G$, with the following three types of vertices, corresponding to the intersection of strips by 0, 1, or 2 intervals, respectively.

\begin{itemize}[leftmargin=1.5em]
	\item For every strip $t_i$ that contains no points of $P$, create a vertex $v_0(t_i)$ with weight 0.  
	Choosing $v_0(t_i)$ in our path will correspond to having 0 intervals cover the interior of $t_i$.  
	Vertices $v_0(t_0)$ and $v_0(t_k)$ will be the source and sink vertices of our graph. 
	
	\item For every interval $q$ and for every strip $t_i$ that is intersected by $q$, create a vertex $v_1(q,t_i)$. Choosing $v_1(q,t_i)$ in our path will correspond to choosing interval $q$ and choosing no other interval that intersects $t_i$. If some point of $P$ is in $t_i$, then set $w(v_1(q,t_i)) = w(q)$, otherwise set $w(v_1(q,t_i)) = 0$.
	
	\item For every two overlapping intervals $q$ and $r$, with $q$ starting before $r$ starts and also ending before $r$ ends, create a vertex $v_2(q,r)$. Choosing $v_2(q,r)$ in our path will correspond to choosing intervals $q$ and $r$.  
	If there is a point of $P$ in the intersection of $q$ and $r$, then set $w(v_2(q,r)) = w(q) + w(r)$, otherwise set $w(v_2(q,r))=0$.
\end{itemize}

We define edges as follows.  Consider each $i<k$ and the strip $t_i$.  The right boundary of $t_i$ exists since some interval has an endpoint there.  Assume first that this was a left endpoint, so there exists an interval $r$ whose left endpoint lies on the right boundary of $t_i$ (and $r$ is unique by our assumption). Then we add the edges listed in Figure~\ref{graph-fig}(a), adding only those where the vertices exist. 
If instead some interval $s$ has its right endpoint at the right boundary of $t_i$, then we add the following edges listed in Figure~\ref{graph-fig}(b) (again, provided the vertices listed there exist). Note that every vertex has at most two outgoing edges.  

\begin{figure}[htb]
	\centering
	\setlength{\tabcolsep}{0in}
	$\begin{tabular}{cc}
	\multicolumn{1}{m{.5\columnwidth}}{\centering\includegraphics[width=.23\columnwidth]{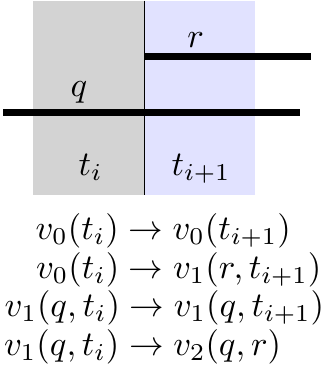}}
	&\multicolumn{1}{m{.5\columnwidth}}{\centering\includegraphics[width=.23\columnwidth]{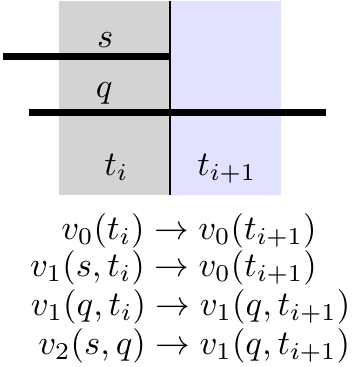}}
	\\(a)&(b)
	\end{tabular}$
	\vspace{-1pt}
	\caption{The edges of $G$.} 
	\vspace{-1pt}
	\label{graph-fig}
\end{figure}

The above construction of graph $G$ implies the following claim.

\begin{claim}
	{The solutions that satisfy the constraints of Lemma~\ref{two-segment-lemma}
		are in one-to-one correspondence with the paths from $v_0(t_0)$ to $v_0(t_k)$ in $G$.  In particular, the bottleneck path in $G$ corresponds to an optimal solution of the minimum membership covering problem for $P$ and $S$. See Figure~\ref{slabs-fig} for an illustration.}
\end{claim}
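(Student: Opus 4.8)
The plan is to establish a bijection between the two sets of objects—feasible solutions satisfying Lemma~\ref{two-segment-lemma} and source-to-sink paths—and then argue that this bijection is weight-preserving in the bottleneck sense, so that the minimizing path yields the minimizing solution. First I would prove the forward direction: given an optimal solution $S^*$ whose intervals satisfy the non-domination and at-most-two-per-vertical-line properties of Lemma~\ref{two-segment-lemma}, I would walk through the strips $t_0,\dots,t_k$ from left to right. For each strip $t_i$, exactly $0$, $1$, or $2$ intervals of $S^*$ intersect its interior, and I would map $t_i$ to the corresponding vertex $v_0(t_i)$, $v_1(q,t_i)$, or $v_2(q,r)$. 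The key local observation is that as we cross the right boundary of $t_i$, the set of intervals covering the strip interior changes by exactly one interval (one interval either starts or ends at that boundary, by our distinctness assumption), which is precisely the transition captured by the edges in Figure~\ref{graph-fig}. I would verify that each such transition corresponds to a legal edge of $G$, so the solution maps to a well-defined path; the non-domination property guarantees that in the $v_2$ case the two intervals genuinely interleave (start-before-start, end-before-end), matching the vertex definition.

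For the reverse direction, I would take any path $\delta$ from $v_0(t_0)$ to $v_0(t_k)$ and read off the set $S^*$ of all intervals named by the vertices along $\delta$. Because the edges only transition between consecutive strips and only add or remove one interval at a time in a nested fashion, I would argue that each interval named at some vertex is named contiguously along the path exactly over the strips it intersects, so $S^*$ is consistent and each strip is covered by exactly the intervals recorded at its vertex. Coverage of all points of $P$ follows because every point sits in some strip $t_i$, and the vertex chosen for $t_i$ satisfies the covering condition built into its very definition (a $v_1$ or $v_2$ vertex carrying a point of $P$ requires that point to lie in the chosen interval(s); the $v_0$ vertices are only created for strips with no points). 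These two maps are mutually inverse, giving the claimed one-to-one correspondence.

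The weight bookkeeping is the crux, and I expect the main obstacle to be confirming that the bottleneck value of a path equals the maximum membership of the corresponding solution, \emph{not just over points but in the sense the problem demands}. The subtlety is that membership must be measured at the points of $P$, and a single interval contributes to the membership of a point only in the strip where that point lies. This is exactly why the vertex weights are defined to be $0$ unless the strip (or the intersection, in the $v_2$ case) actually contains a point of $P$: the weight $w(v_1(q,t_i))$ or $w(q)+w(r)$ is charged precisely when a point witnesses that membership. I would therefore show that for each point $p\in P$, its strip contributes a vertex whose weight equals $p$'s membership under $S^*$, and conversely every nonzero vertex weight along the path is realized by some point's membership. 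Hence the maximum vertex weight along $\delta$ (the bottleneck value) equals the maximum membership of $S^*$. Combining this with the correspondence and with Lemma~\ref{two-segment-lemma}, which guarantees an optimal solution lies among those represented by paths, I conclude that a minimum-bottleneck path yields an optimal solution, completing the proof of the claim.
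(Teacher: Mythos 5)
Your argument is correct and follows exactly the reasoning the paper intends: the paper states this claim without an explicit proof (asserting only that ``the above construction of graph $G$ implies'' it), and your strip-by-strip bijection together with the observation that the zero/nonzero vertex weights are charged precisely when a point of $P$ witnesses the corresponding membership is the argument the construction is designed to support. In particular, your use of the non-domination and at-most-two-intervals properties of Lemma~\ref{two-segment-lemma} to justify that every transition is a legal edge (and that a $v_2$ vertex's two intervals interleave) matches the paper's setup, so nothing further is needed.
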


Now we analyze the running time of our algorithm.  Since $G$ is a DAG, a bottleneck path in $G$ can be computed in linear time on the size of $G$.

We claim that the size of $G$ is $O(m+M)$, where $M$ is number of edges in the interval graph defined by $S$.  Since every vertex has
at most two outgoing edges, it suffices to show that $G$ has $O(m+M)$ vertices.  There are at most $2m+1$ vertices of type
$v_0$ (one per strip).  
Every vertex of type $v_2$ is uniquely defined by two overlapping intervals, so there are at most $M$ of them. 
To bound the number of vertices of type $v_1$, observe that such a vertex is defined by an interval $r$ and a strip $t_i$
with $1\leqslant i < k$.  Let $r'$ be the interval that defines the left boundary of $t_i$; it might be the case that $r'=r$.  The number of vertices for which $r'=r$ is at most $2m-1$ because this happens (by the general position assumption) only once per strip.  Assume that $r'\neq r$. Since $r$ and $r'$  intersect (at the left boundary of $t_i$), it follows that $(r,r')$ is an edge in the interval graph.   Any such edge can be charged 
at most 4 times, once at each of the left and the right boundaries of $r$ and $r'$. Thus, the number of vertices for which $r'\neq r$ is at most $4M$.
Hence $G$ has size $O(m+M)$ and the bottleneck path can be found within the same time.

The linked-list representation of $G$ can be obtained in $O(m+M)$ time by going through the strips from left to right.
This, together with the initial pruning of points of $P$ in $O(n+m)$ time, implies that the total running time of our algorithm is $O(n+m+M)$. 

\vspace{8pt}
\noindent{\bf Remark 3.}
We can solve the MPC problem on $n$ points and $m$ weighted intervals in $O(n+m+M)$ time; this can be done by adjusting the weights of the vertices of $G$ and then finding a bottleneck path in the resulting graph. We set the weight of every vertex $v_1(q,t_i)$ to $w(q)$ regardless of whether $t_i$ contains a point of $P$ or not, and we set the weight of every vertex $v_2(q,r)$ to $w(q)+w(r)$ regardless of whether the intersection of $q$ and $r$ contains a point of $P$ or not. 


\begin{theorem}
	Both minimum membership and minimum ply covering problems of $n$ points with $m$ weighted intervals on the real line can be solved in $O(n+m+M)$ time where $M$ is the number of pairs of overlapping intervals, provided that the input points and the endpoints of the intervals are given in sorted order.
\end{theorem}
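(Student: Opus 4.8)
The plan is to reduce both problems to a single \emph{bottleneck-path} computation in the directed acyclic graph $G$ constructed in Section~\ref{algorithm-section}, whose vertices record, strip by strip, which intervals cover the interior of a strip. The correctness rests entirely on Lemma~\ref{two-segment-lemma}, which lets us restrict attention to solutions in which the interior of each strip $t_i$ meets at most two intervals. This is exactly what makes the three vertex types $v_0$, $v_1$, $v_2$ exhaustive: each strip is crossed by $0$, $1$, or $2$ intervals of such a solution.

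First I would preprocess in $O(n+m)$ time: sweep the sorted endpoints to form the strips $t_0,\dots,t_k$ with $k\le 2m$, and discard all but one point of $P$ per strip (legitimate since all points in a strip lie in exactly the same intervals), so that afterwards $n\le 2m-1$. Then I would build $G$ with the edges of Figure~\ref{graph-fig}, weighting each vertex by the membership it contributes \emph{in its own strip}: $v_0$ gets weight $0$, and $v_1(q,t_i)$ (resp.\ $v_2(q,r)$) gets the total weight of its intervals if the strip (resp.\ the overlap) contains a point of $P$, and $0$ otherwise. The key correctness step is the correspondence Claim: I would show that source-to-sink paths are in bijection with the two-interval solutions of Lemma~\ref{two-segment-lemma}, and that along such a path the maximum vertex weight equals the maximum membership of the corresponding solution, because every point of $P$ is accounted for at exactly one vertex---the unique strip, or the unique pairwise overlap, that contains it. Granting this, a bottleneck path (one minimizing the maximum vertex weight) yields an optimal solution.

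The running-time claim reduces to showing $|G|=O(m+M)$, after which the bottleneck path in a DAG is found in linear time by a single topological pass. There are $O(m)$ vertices of type $v_0$ (one per strip) and at most $M$ of type $v_2$ (one per overlapping pair). The only delicate count is the type-$v_1$ vertices $v_1(r,t_i)$, which I would charge to the interval $r'$ defining the left boundary of $t_i$: if $r'=r$ this happens once per strip, contributing $O(m)$; otherwise $r$ and $r'$ overlap, so $(r,r')$ is an interval-graph edge, and each such edge is charged a bounded number of times (once at each of the four endpoints of $r$ and $r'$), contributing $O(M)$. Since each vertex has at most two outgoing edges, the edge count is also $O(m+M)$, and the whole construction can be emitted in $O(m+M)$ time by one left-to-right sweep. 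Combined with the $O(n+m)$ preprocessing, the total is $O(n+m+M)$.

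Finally, the minimum ply variant needs no new graph: as in Remark~3, I would simply re-weight every $v_1(q,t_i)$ to $w(q)$ and every $v_2(q,r)$ to $w(q)+w(r)$ \emph{unconditionally}, so that a vertex's weight records the membership of \emph{every} point in its strip or overlap rather than only the points of $P$. The same bijection and the same bottleneck computation then minimize the ply, leaving the time bound unchanged. I expect the main obstacle to be the correctness Claim---in particular verifying both directions of the bijection and the fact that the bottleneck (maximum-vertex-weight) objective faithfully tracks the maximum membership; the $O(m+M)$ size bound, while requiring the charging argument above, is comparatively routine once Lemma~\ref{two-segment-lemma} is in hand.
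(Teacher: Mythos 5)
Your proposal follows the paper's proof essentially verbatim: the same strip decomposition and point pruning, the same three vertex types weighted by in-strip membership, the same reliance on Lemma~\ref{two-segment-lemma} for the bijection with bottleneck paths, the same four-fold charging of $v_1$ vertices to interval-graph edges, and the same unconditional re-weighting for the ply variant. The approach and all key steps match the paper's argument, so no further comparison is needed.
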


\section{NP-hardness}
\label{nphard-proof}

In this section we study the hardness of the MPC problem on unit squares and unit disks. Erlebach and van Leeuwen \cite{Erlebach2008} proved that the MMSC problem, on both unit disks and unit squares, is NP-hard and cannot be approximated by a ratio smaller than 2. More precisely, they proved it is NP-hard to decide whether or not a solution with membership 1 exists. They use a reduction from the NP-complete problem of deciding whether or not a planar graph $G$ is 3-colorable. Their reduction uses a rectilinear embedding of $G$ in the plane; see \cite[Chapter 10]{vanLeeuwen2009} for details. By adjusting the gadgets in their reduction we can show that the MPC problem on unit squares and unit disks, with objective ply 1, is also NP-hard and cannot be approximated by a ratio smaller than 2. We sketch the reduction using our adjusted gadgets, and hence 
prove the following theorem. 

\begin{theorem}
	\label{hardness-thr}
	It is NP-hard to decide whether or not there exists a solution with ply one, for the minimum ply cover problem on unit squares and on unit disks. 
\end{theorem}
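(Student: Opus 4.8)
The plan is to build on the known NP-hardness of Minimum Membership Set Cover (MMSC) for unit squares and unit disks, established by Erlebach and van Leeuwen via reduction from planar 3-colorability. Their reduction produces an instance in which a solution with membership 1 exists if and only if the input planar graph $G$ is 3-colorable, using a rectilinear embedding of $G$ and gadgets built from unit objects placed at vertices and along edges. Since MPC differs from MMSC only in that ply counts membership at \emph{all} points of the plane rather than only at points of $P$, the entire reduction framework transfers directly; the only thing that must be reverified is that the gadgets continue to enforce the coloring constraint when one measures ply instead of membership.

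First I would recall the structure of the Erlebach--van Leeuwen gadgets: each vertex of $G$ is represented by a cluster of unit objects among which the chosen solution must pick exactly one (corresponding to a color), and each edge is represented by a path of objects that propagates and compares the choices at its two endpoints, forcing adjacent vertices to differ. In the MMSC setting, membership-1 feasibility at the points of $P$ is what pins down these choices. The key observation is that in a correctly designed gadget, a solution achieving membership 1 on $P$ in fact already achieves ply 1 everywhere: if two chosen objects overlapped anywhere in the plane, one can arrange (by the placement of the gadget) that their overlap region either contains a point of $P$ or can be made to contain one after a small adjustment, so that membership 1 at $P$ forces disjointness of the chosen objects. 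Thus the adjusted gadgets are designed so that ``ply $1$'' and ``membership $1$ on $P$'' coincide for feasible solutions.

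The main steps, in order, are: (1) state precisely the adjusted vertex gadget and verify that a ply-1 solution must select exactly one object per vertex, i.e. that any two objects available at a vertex overlap in a region forced to contain (or be witnessed by) a point of $P$; (2) state the adjusted edge gadget and verify that ply-1 feasibility along the edge path forces the two endpoint choices to encode different colors, again by placing points of $P$ so that conflicting (same-color) propagation would create an overlap at a point of $P$ and hence ply $2$; (3) assemble the gadgets along the rectilinear embedding and check that gadgets belonging to different vertices/edges do not interfere, so the global ply is $1$ exactly when every gadget is locally consistent; and (4) conclude that a ply-1 MPC solution exists iff $G$ is 3-colorable, giving NP-hardness, and that distinguishing ply $1$ from ply $2$ is NP-hard, which yields the inapproximability ratio of $2$ stated earlier. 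The same construction works for unit disks with the analogous disk gadgets.

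The main obstacle I anticipate is step (2), verifying the edge gadgets under the ply metric: because ply is measured over the whole plane, I must ensure the propagation objects along an edge do not accidentally overlap in empty regions (which would inflate the ply beyond $1$ even for a \emph{correct} coloring), while simultaneously ensuring that an \emph{incorrect} coloring is detected as an overlap at a point of $P$. This is a tighter geometric constraint than in the MMSC reduction, where empty-region overlaps were harmless. The fix is exactly the ``adjustment of gadgets'' alluded to in the statement: space the propagation objects so consecutive objects touch but do not overlap in a correct configuration, and insert points of $P$ precisely at the locations where a color mismatch would force two objects to coincide. Confirming that such a spacing is simultaneously achievable for all gadgets in the rectilinear layout, for both squares and disks, is the crux of the argument; once that geometric feasibility is checked, the reduction's correctness follows the Erlebach--van Leeuwen template verbatim.
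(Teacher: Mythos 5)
Your proposal follows essentially the same route as the paper: a reduction from planar graph 3-colorability via the Erlebach--van Leeuwen framework, with the gadgets (vertex, transport/edge, color-checking) adjusted so that the selected objects are pairwise disjoint in any valid configuration, making ply~1 over the whole plane coincide with membership~1 on $P$. The crux you identify---ensuring chosen objects never overlap in point-free regions under a correct coloring while a color clash still forces an overlap---is exactly the adjustment the paper makes in its gadget figures, so the approaches match.
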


Since ply---of a set of squares or disks---is an integer, this theorem implies that there is no polynomial-time approximation algorithm, with ratio smaller than 2, for the MPC problem on unit squares and on unit disks, unless P=NP.

\begin{figure}[h]
	\centering
	\setlength{\tabcolsep}{0in}
	$\begin{tabular}{cc}
	\multicolumn{1}{m{.37\columnwidth}}{\centering\includegraphics[width=.16\columnwidth]{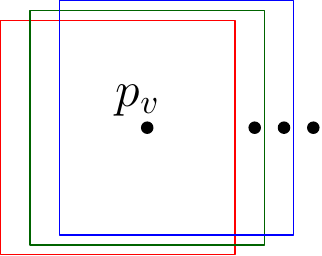}}
	&\multicolumn{1}{m{.63\columnwidth}}{\centering\includegraphics[width=.47\columnwidth]{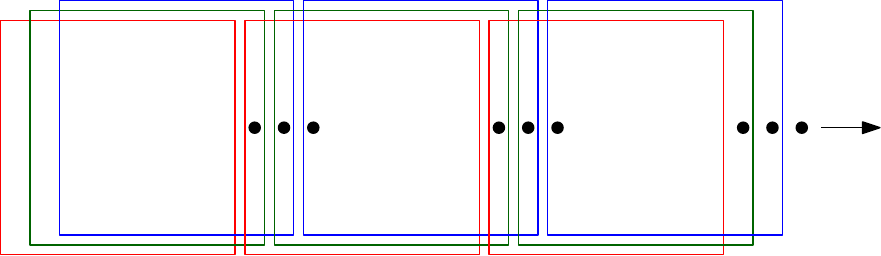}}\\
	(a)&(b)\\[.3cm]
	\multicolumn{1}{m{.37\columnwidth}}{\centering\includegraphics[width=.33\columnwidth]{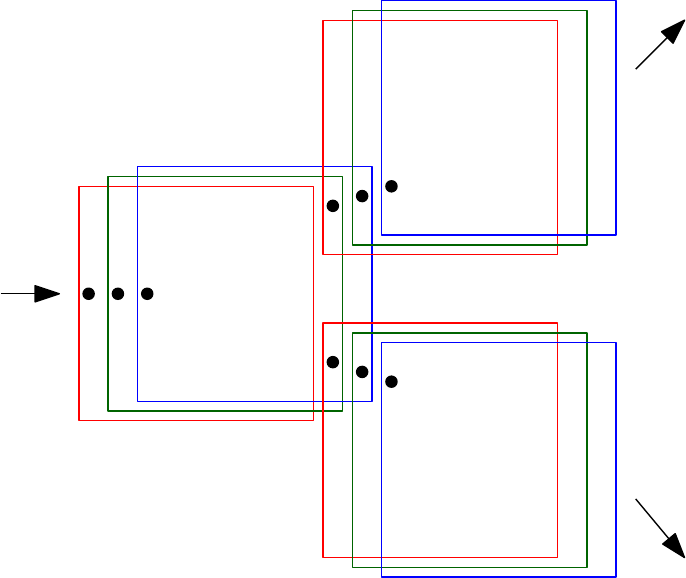}}
	&\multicolumn{1}{m{.63\columnwidth}}{\centering\includegraphics[width=.6\columnwidth]{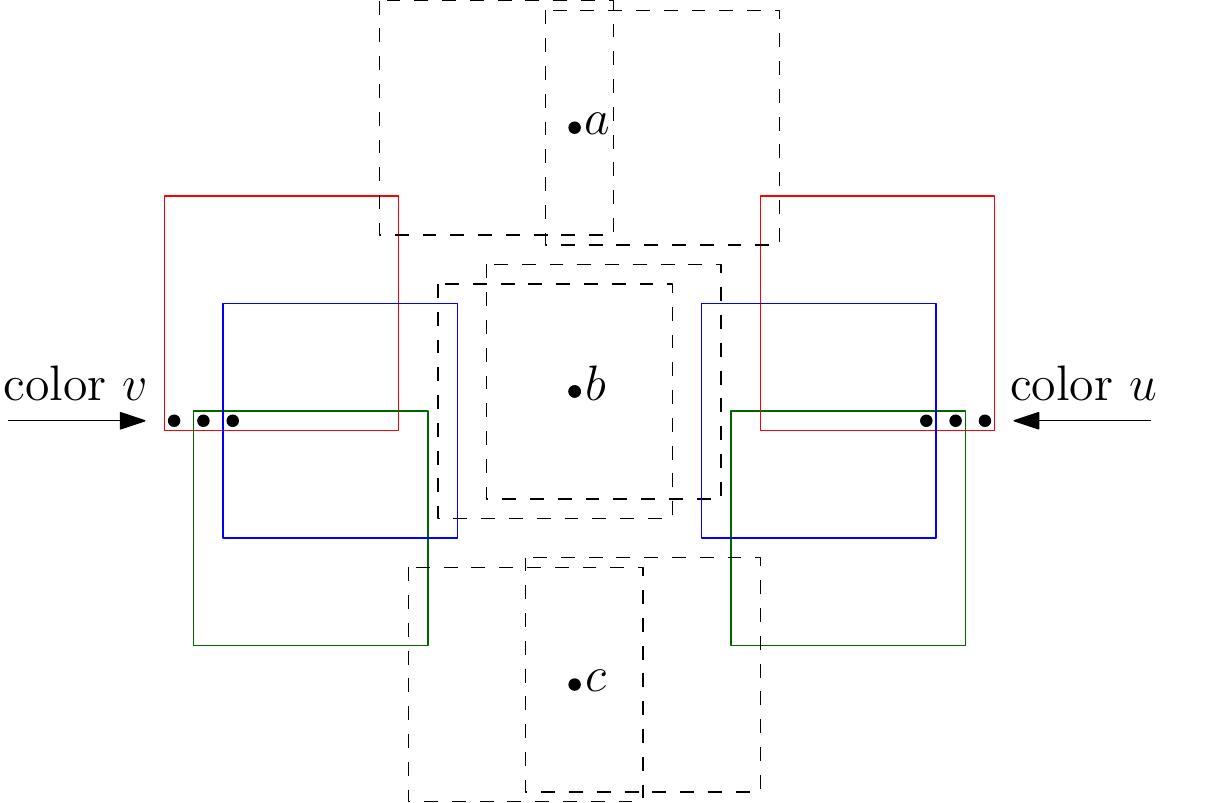}}\\
	(c)&(d)
	\end{tabular}$
	\vspace{-1pt}
	\caption{The NP-hardness of the MPC problem on unit squares: (a) the vertex gadget, (b) the transport gadget, (c) the duplicate gadget, and (d) the color checking gadget.}
	\label{continues-square-hardness-fig}
\end{figure}

In the rest of this section we prove Theorem~\ref{hardness-thr}. Similar to that of Erlebach and van Leeuwen \cite{Erlebach2008}, our proof uses a reduction from the problem of 3-coloring a planar graph $G$. We sketch this reduction for the MPC problem on unit sqaures; the reduction for unit disks is analogous. 
We need a few gadgets, their construction is described below.

Figure~\ref{continues-square-hardness-fig}(a) shows the vertex gadget for every vertex $v$ in $G$. To cover the point $p_v$, exactly one of the three squares containing $p_v$ must be in the solution, and this corresponds to assigning
one of the three colors to $v$. Depending on this choice, either 0, 1, or 2 points among the triple of points on the right will also be covered. Figure~\ref{continues-square-hardness-fig}(b) shows a transport gadget, which transports a chosen
color along a chain of squares from left to right. The reader may verify that this representation can be modified to bend around corners to represent vertical transportations. Figure~\ref{continues-square-hardness-fig}(c) shows a gadget that duplicates a chosen color. We need one extra gadget to make sure that two adjacent vertices $u$ and $v$ in $G$ will be assigned different colors. Figure~\ref{continues-square-hardness-fig}(d) shows this gadget, assuming the color of $u$ arrives from right and the color of $v$ arrives from left. If $u$ and $v$ have the same color, then to cover $a$, $b$ and $c$, we require overlapping squares, which contradicts the ply being 1. If $u$ and $v$ have different colors, then $a$, $b$, and $c$ can be covered by three of the six dashed disks. Therefore, our instance of the MPC problem has a solution with ply 1 if and only if $G$ is 3-colorable. Thus the hardness of the 3-colorability problem implies the hardness of the MPC problem. A similar construction of gadgets for the NP-harness of the MPC problem on unit disks is shown in Figure~\ref{continues-disk-hardness-fig}; the points $b$ and $c$ play the same role as in Figure~\ref{continues-square-hardness-fig}(d), while $a_1, a_2,a_3$ play the role of $a$. Therefore, the above reduction proves Theorem~\ref{hardness-thr}.

\begin{figure}[h]
	\centering
	\setlength{\tabcolsep}{0in}
	$\begin{tabular}{ccc}
	\multicolumn{1}{m{.18\columnwidth}}{\centering\includegraphics[width=.13\columnwidth]{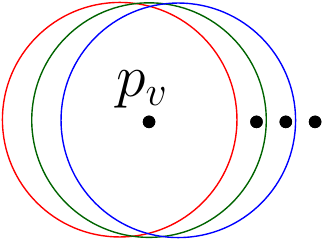}}
	&\multicolumn{1}{m{.47\columnwidth}}{\centering\includegraphics[width=.4\columnwidth]{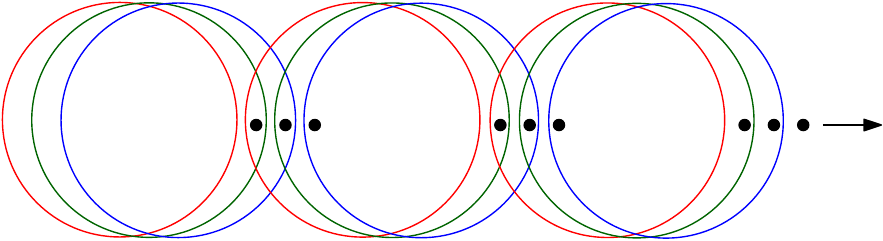}}
	&\multicolumn{1}{m{.35\columnwidth}}{\centering\includegraphics[width=.27\columnwidth]{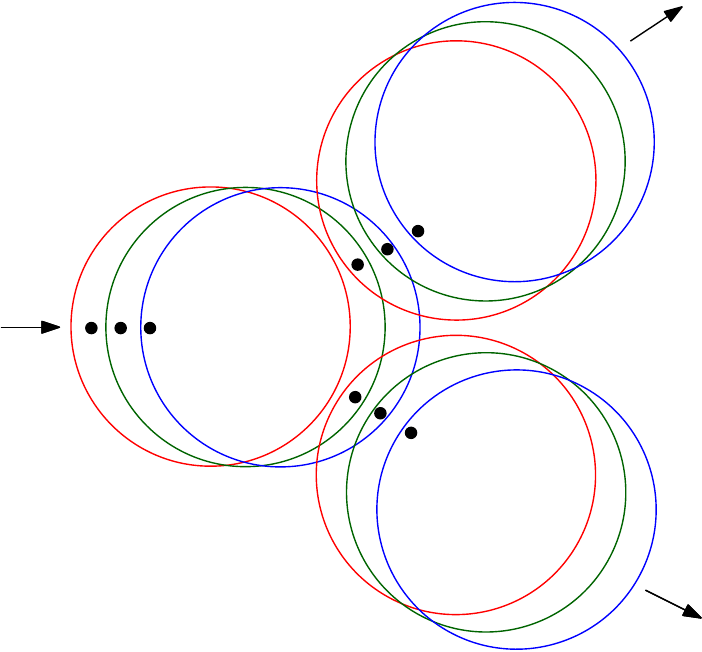}}\\
	\multicolumn{3}{m{.99\columnwidth}}{\centering\includegraphics[width=.77\columnwidth]{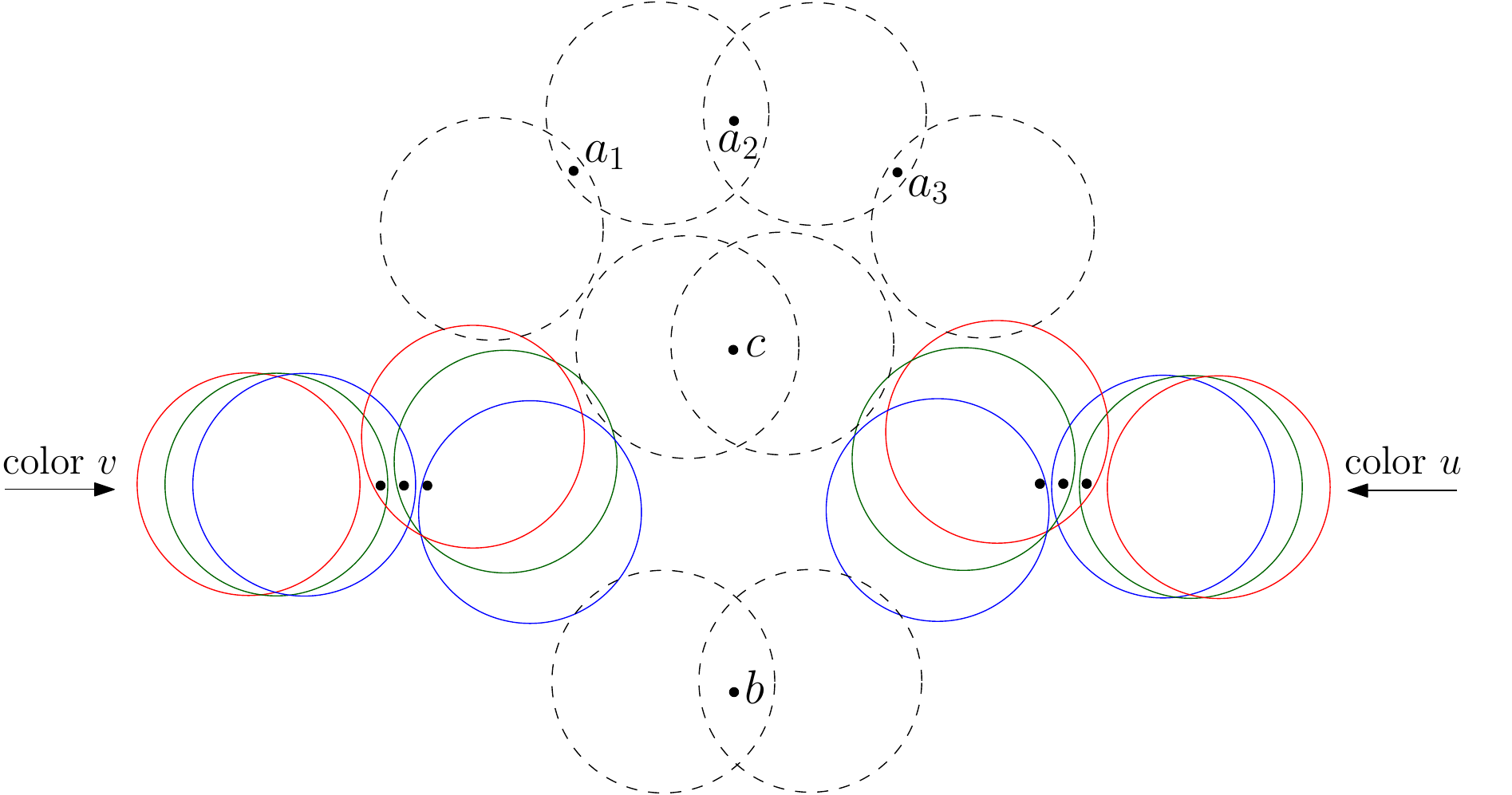}}
	\end{tabular}$
	\vspace{-1pt}
	\caption{Gadgets for the NP-hardness proof of the MPC problem on unit disks.}
	\label{continues-disk-hardness-fig}
\end{figure}

\section{Conclusions}
The following questions arise from this work and that of Erlebach and van Leeuwen \cite{Erlebach2008}: (i) Are there polynomial-time $O(1)$-approximation algorithms for the MPC problem on unit squares or unit disks when the objective value is not necessarily a constant? (ii) Is there a polynomial-time $O(1)$-approximation algorithm for the MMSC problem on unit disks when the objective value is constant?  

\paragraph{Acknowledgment.}
We thank Albert Gr\"{a}f for sending us his PhD thesis.

\bibliographystyle{abbrv}
\bibliography{Minimum-Depth-Covering.bib}

\end{document}